\def\be{\begin{equation}}
\def\ee{\end{equation}}
\def\bea{\begin{eqnarray}}
\def\eea{\end{eqnarray}}
\def\bma{\begin{mathletters}}
\def\ema{\end{mathletters}}
\def\0{\overline{0}}
\def\tr{\mbox{tr}}
\def\q0{\underline{0}}
\def\Z{{\mathbb Z}}
\def\id{{\mathbb I}}
\def\R{\mathbb{R}}
\def\N{\mathbb{N}}
\def\tr{\mbox{tr}}
\def\one{\leavevmode\hbox{\small1\normalsize\kern-.33em1}}
\def\bra#1{\langle#1|} \def\ket#1{|#1\rangle}
\def\braket#1#2{\langle#1|#2\rangle}
\def\proj#1{\ket{#1}\!\bra{#1}}
\def\G{{\cal G}}
\newtheorem{theo}{Theorem}
\newtheorem{lemma}[theo]{Lemma}
\newtheorem{prop}[theo]{Proposition}
\def\id{{\mathbb I}}
\def\tr{\mbox{tr}}
\begin{document}
\title{Testing Microscopic Discretization}
\author{Miguel Navascu\'es$^1$, David P\'erez-Garc\'ia$^2$ and Ignacio Villanueva$^2$\\$^1$School of Physics, University of Bristol, BS8 1TL, U.K.\\$^2$Facultad de Ciencias Matem\'aticas, Universidad Complutense de Madrid, 28040, Spain}

\date{}
\maketitle

\begin{abstract}
What can we say about the spectra of a collection of microscopic variables when only their coarse-grained sums are experimentally accessible? In this paper, using the tools and methodology from the study of quantum nonlocality, we develop a mathematical theory of the macroscopic fluctuations generated by ensembles of independent microscopic discrete systems. We provide algorithms to decide which multivariate gaussian distributions can be approximated by sums of finitely-valued random vectors. We study non-trivial cases where the microscopic variables have an unbounded range, as well as asymptotic scenarios with infinitely many macroscopic variables. From a foundational point of view, our results imply that bipartite gaussian states of light cannot be understood as beams of independent $d$-dimensional particle pairs. It is also shown that the classical description of certain macroscopic optical experiments, as opposed to the quantum one, requires variables with infinite cardinality spectra.
\end{abstract}

\section{Introduction}
The central limit theorem is one of the most celebrated results in statistics and applied probability. Originally formulated by Laplace, it states that the sum of many independent variables tends to a gaussian distribution in the asymptotic limit. The central limit theorem is used to estimate error bars in all sorts of physical experiments. It also justifies why the probability density of magnitudes as diverse as the IQ of a large sample of individuals or the noise in a radio signal approaches the so-called `Bell curve', by postulating that such quantities are the sum of many independent (microscopic) contributions\footnote{In the case of the IQ test, the scores of each question}.

In this work we somehow reverse this reasoning and consider the problem of extracting  information about the `micoscropic variables' provided that we know the experimental macroscopic data.
Specifically, we are interested in how different constraints over discrete microscopic models manifest in the macroscopic limit.

The reason why we restrict our study to discrete models is two-fold: on one hand, from a practical point of view, discrete systems adopting a finite (small) set of possible values are preferred in computational modeling, due to the modest memory resources required to store and manipulate them. If the internal mechanisms of a given -say- biological process at the cellular level are unknown but there is plenty of data available about the macroscopic behavior of a group of cells, it is thus advisable to search for discrete models for individual cell behavior. On the other hand, from a foundational point of view, there exist some physical quantities, like time and space, which, although traditionally regarded as continuous, could actually be discrete (see the causal set approach to Quantum Gravity  \cite{causal}). It is thus interesting to know if such a discretization leaves a signature at the macroscopic level.

In this paper we will perform a thorough analysis of this mathematical problem. We will show that, if the spectrum of the microscopic variables is finite and fixed a priori, then there always exist macroscopic bivariate distributions which cannot be approximated with such microscopic models. In fact, deciding which gaussian distributions are approximable or not can be cast as a linear program. We will also arrive at the surprising conclusion that certain spectra of infinite cardinality do \emph{not} allow to reproduce all bivariate covariance matrices. If the spectrum of the microscopic systems is free but finite, then several $n$-tuples of $n+1$-valued variables are needed to recover all $n$-variate gaussian distributions, and the problem of deciding if a given gaussian distribution is generated by $d$-valued microscopic systems can be solved via semidefinite programming.

Based on the ubiquitous applicability of the central limit theorem,  we expect our results to be applicable in a large variety of scientific situations where one wants to infer microscopic information from macroscopic data, such as molecular biology, genetics, neuroscience or social sciences. We will mainly illustrate this potential in the context of quantum mechanics, but we will also give some new results about Brownian motion in the plane and provide a operational meaning to the problem QPRATIO, recently introduced in complexity theory \cite{Bhaskara}.

Based on the mathematical results we give in the paper, the main application we can infer within quantum mechanics are ways of proving that a simple quantum experiment cannot have a simple classical explanation, where we measure the complexity of an explanation by the size of the underlying discrete sample space\footnote{Namely, the set of possible detector responses.}. This type of measure of complexity in the quantum case has become important in last years by (1) the realm of quantum information theory, where Hilbert spaces are finite dimensional and (2) the need to have small Hilbert space dimension in some security proofs of quantum key distribution protocols, like the original BB84.

Indeed, examples of how to get  experiments with a simple quantum explanation but a complex classical one have appeared recently in the quantum information literature, albeit in a completely different setup: snapshots of a Markovian evolution at different times \cite{Wolf}.
 
Along these lines, using the techniques developed along this paper, we will show that no finite classical model can account for the intensities observed when both sides of a bipartite state consisting of many copies of the maximally entangled state are subject to extensive equatorial spin measurements. Since each of the microscopic quantum variables involved can adopt one of two values (up or down) in each instance of the experiment, the quantum description can thus be considered infinitely less complex than its classical counterpart. This result has a direct implication in the field of Foundations of Quantum Mechanics: in \cite{mac_loc} it was shown that any Bell-type quantum experiment admits a classical explanation when brought to the macroscopic scale (this phenomenon was named \emph{macroscopic locality}). However, the particular classical model found in \cite{mac_loc} seemed excessively convoluted: whereas the quantum variables could adopt a finite number of values, the spectrum of the classical ones was continuous, ranging from $-\infty$ to $+\infty$. Now we know that one indeed needs such complexity for the classical model.

One can also obtain easily with the results of this paper that  {\it light is infinite dimensional}, or formally, that homodyne measurements over bipartite gaussian states of light cannot be understood as collective spin measurements over ensembles of independent pairs of correlated microscopic particles.

The paper is structured as follows: first, we will describe in detail a simple scenario where microscopic effects over the macroscopic variables can already be seen. We will give a first illustrative application of that to Brownian motion. Then, in Section \ref{fixed_output}, we will study the general problem of characterizing gaussian distributions generated by microscopic variables with a fixed output structure. We will treat both the finite and infinite dimensional case. We will also address some problems which may arise in practical implementations of macroscopic experiments. In Section \ref{free_output} we will analyze the case where the set of possible values of the microscopic variables is not known a priori and may actually vary between independent tuples. The complexity of the problem will be treated next, and the consequences for Foundations of Quantum Mechanics will be covered in Section \ref{quantum_non_loc}. In Section \ref{conclusion} we will present our conclusions.

\section{A simple example}
\label{simple_example}
Suppose that two parties, say Alice and Bob, perform an experiment like the one described in Fig. \ref{simple_ex}, that allows them to measure the continuous variables $X$ and $Y$, respectively. After many repetitions of the experiment, they find that $X$ and $Y$ follow a bivariate gaussian probability distribution $G(X,Y)$, characterized by the values 

\be
\langle X\rangle, \langle Y\rangle,\gamma_{XX}\equiv\langle X^2\rangle-\langle X\rangle^2, \gamma_{YY}\equiv\langle Y^2\rangle-\langle Y\rangle^2, \gamma_{XY}\equiv\langle XY\rangle-\langle X\rangle\langle Y\rangle. 
\ee

\noindent Alice and Bob then make the hypothesis that $X$ and $Y$ are not gaussian by accident, but because they actually correspond to the sum of many pairwise-correlated independent variables. A plausible explanation is that whenever they initialize their experiment, multiple particle pairs are created in some intermediate region. Suppose for simplicity that the particles of each pair are identical and have only two levels (Figure \ref{simple_ex}). That is, whenever some particle impinges on a detector, it will release a signal of strength $\lambda_2$ ($\lambda_1$) if the particle happened to be on state $2$ ($1$)\footnote{This is the case, for instance, when the particles happen to be photons and the detector is composed of a vertically disposed polarizer followed by a photocounter. In that case, photons with vertical polarization (state $2$) will reach the photocounter and thus produce $\lambda_2=1$ electrons, whereas no electrons ($\lambda_1=0$) will be released if the photon was horizontally polarized (state $1$).}; $X$ and $Y$ are just the sum of all such individual signals\footnote{We assume the detectors' responses to be linear.}. We will call this theory the \emph{microscopic dichotomic model}. Alice and Bob could propose this model, for instance, to explain why the outcomes of homodyne measurements in usual quantum optical experiments follow gaussian distributions. 

We will next show that, in certain circumstances, the microscopic dichotomic model can be experimentally refuted, even when no assumptions are made on the values of $\lambda_1,\lambda_2$, the number of particles $N$ and even allowing for the existence of different types of particles, each of them dichotomic, associated to different values of $\lambda_i$.

\begin{figure}
  \centering
  \includegraphics[width=13 cm]{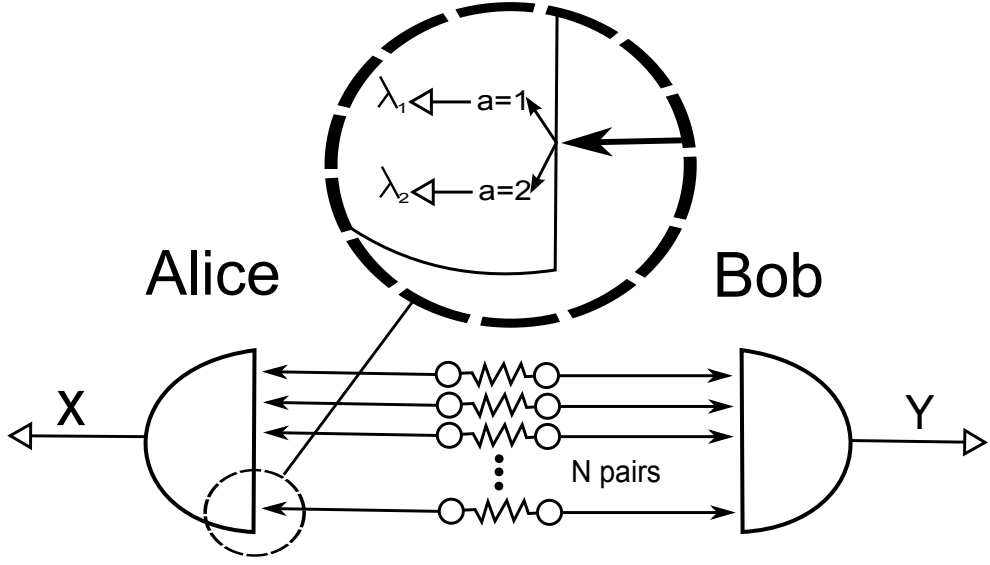}
  \caption{\textbf{Microscopic model of a macroscopic experiment.} N particle pairs travel to Alice's and Bob's lab and interact linearly with their detectors. The macroscopic variables $X,Y$ observed by Alice and Bob are the coarse-grained sum of all such microscopic interactions.}
  \label{simple_ex}
\end{figure}

We suppose then that our probability distribution follows from a microscopic dichotomic  model. In this model we allow for the existence of several types of independent pairs of correlated particles and for each of these types of particles we assume a dichotomic model. 

The contribution to the expectation values and covariance matrix of each pair of particles $j$ is given by 

\begin{eqnarray*}
&&\langle X\rangle^j= \sum_ a p_A^j(a)\lambda^j_{a} \\
&&\langle Y\rangle^j= \sum_b p_B^j(b)\lambda^j_{b}
\\
&&\gamma_{XX}^j= \sum_{a=1,2}\sum_{a'=1,2} \lambda^j_{a}\lambda^j_{a'}\{p_A^j(a)\delta(a,a')-p_A^j(a)p_A^j(a')\}\\
&&\gamma_{YY}^j=\sum_{b=1,2}\sum_{b'=1,2} \lambda_{b}^j\lambda_{b'}^j\{p_B^j(b)\delta(b,b')-p_B^j(b)p_B^j(b')\}\\
&&\gamma_{XY}^j=\sum_{a,b=1,2} \lambda_{a}^j\lambda_{b}^j\{p^j_{AB}(a,b)-p_{A}^j(a)p_B^j(b)\},
\end{eqnarray*}

\noindent where $p^j_{AB}(a,b)$ denotes the probability that in one of the pairs of type $j$, Alice's particle is in state $a$ and Bob's, in state $b$; $p^j_A(a),p^j_B(b)$ represent the corresponding marginal probabilities and  $\lambda_{a}^j, \lambda_{b}^j$ are the corresponding possible values. 

\smallskip

\noindent Now, we assume that every pair of particles is independent from the others. Therefore, calling  $N_j$ to  the number of pairs of particles of type  $j$, we have  that the expectations and the values of the covariance matrix $\gamma$ are given by

\begin{eqnarray}
&&\langle X\rangle= \sum_{j} N_j \sum_ a p_A^j(a)\lambda^j_{a}\nonumber\\
&&\langle Y\rangle= \sum_{j}N_j \sum_b p_B^j(b)\lambda^j_{b}\nonumber\\
&&\gamma_{XX}=\sum_{j} N_j\sum_{a=1,2}\sum_{a'=1,2} \lambda^j_{a}\lambda^j_{a'}\{p_A^j(a)\delta(a,a')-p_A^j(a)p_A^j(a')\},\nonumber\\
&&\gamma_{YY}=\sum_{j} N_j\sum_{b=1,2}\sum_{b'=1,2} \lambda_{b}^j\lambda_{b'}^j\{p_B^j(b)\delta(b,b')-p_B^j(b)p_B^j(b')\},\nonumber\\
&&\gamma_{XY}=\sum_{j} N_j\sum_{a,b=1,2} \lambda_{a}^j\lambda_{b}^j\{p^j_{AB}(a,b)-p_{A}^j(a)p_B^j(b)\}.
\label{matriz}
\end{eqnarray}

\smallskip

The values of $\langle X\rangle,\langle Y\rangle$ do not carry much information about the original microscopic distribution, since we can set them to any value while leaving $\gamma$ invariant. Indeed, suppose that we add to the ensemble a species of particles such that $\lambda_1=0,\lambda_2\not=0$. Then, setting $p(1,2)=1$ and varying the value of $\lambda_2$ (or, equivalently, the number of particles of such species present in the ensemble), we can vary the value of $\langle X\rangle$ without affecting $\langle Y\rangle$ or $\gamma$. Likewise, we can choose the value of $\langle Y\rangle$.

We will search for the information about the origin of $X$ and $Y$ hidden in $\gamma$. Concerning this, one can check that the following proposition holds:

\begin{prop}
Let $G(X,Y)$ be a gaussian bivariate probability distribution.  Then, $G(X,Y)$ can be generated by a microscopic dichotomic model (as defined above) iff

\begin{eqnarray}
&&\gamma_{XX}-|\gamma_{XY}|\geq 0,\nonumber\\
&&\gamma_{YY}-|\gamma_{XY}|\geq 0.
\label{basic}
\end{eqnarray}

\end{prop}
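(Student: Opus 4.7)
The plan is to prove both implications by reducing to the contribution of a single dichotomic pair and then aggregating.

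For \emph{necessity}, since each quantity in (\ref{matriz}) is a nonnegatively-weighted sum of per-species terms, the triangle inequality $|\gamma_{XY}|\leq\sum_j N_j|\gamma_{XY}^j|$ reduces the problem to showing, for one pair, that $|\gamma_{XY}^j|\leq\min\{\gamma_{XX}^j,\gamma_{YY}^j\}$. Writing $a=p_A^j(1)$, $b=p_B^j(1)$, $s=\lambda_2^j-\lambda_1^j$, and $C=p_{AB}^j(1,1)-ab$, the formulas in the text collapse to $\gamma_{XX}^j=a(1-a)s^2$, $\gamma_{YY}^j=b(1-b)s^2$, and $\gamma_{XY}^j=Cs^2$. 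The four positivity constraints $p_{AB}^j(\cdot,\cdot)\geq 0$ pin $C$ between $-\min\{ab,(1-a)(1-b)\}$ and $\min\{a(1-b),(1-a)b\}$, and a short case split (WLOG $a\leq b$ on the upper side and $a+b\leq 1$ on the lower) shows that both extremes are majorized in absolute value by $\min\{a(1-a),b(1-b)\}$. This yields the per-species bound; summing back gives the two inequalities in the statement.

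For \emph{sufficiency}, I exhibit an explicit three-species model; WLOG $\gamma_{XY}\geq 0$, since flipping the sign of Bob's $\lambda_2-\lambda_1$ negates $\gamma_{XY}$ while leaving $\gamma_{XX},\gamma_{YY}$ invariant. \emph{Species 1}: perfectly correlated, with $a=b=1/2$ and $p_{AB}^1(1,1)=p_{AB}^1(2,2)=1/2$, contributing $(s_1^2/4,s_1^2/4,s_1^2/4)$ per pair, with $s_1$ chosen so $s_1^2/4=\gamma_{XY}$. \emph{Species 2}: Bob's particle deterministic ($p_B^2(1)=1$) and $a=1/2$, contributing $(s_2^2/4,0,0)$ per pair, with $s_2^2/4=\gamma_{XX}-\gamma_{XY}\geq 0$. \emph{Species 3}: the Alice/Bob mirror of Species 2, contributing $(0,\gamma_{YY}-\gamma_{XY},0)$. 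These three contributions sum to $\gamma$ exactly, and the means $\langle X\rangle,\langle Y\rangle$ are set separately using the dummy-species trick described just before the proposition. Asymptotic gaussianity is obtained in the standard way, by taking $N_j\to\infty$ with $s_j\propto 1/\sqrt{N_j}$ so that the covariance is preserved while the central limit theorem applies.

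The only nontrivial step is the per-species inequality, so the main obstacle is the small case analysis relating $\min\{a(1-b),(1-a)b\}$ and $\min\{ab,(1-a)(1-b)\}$ to $\min\{a(1-a),b(1-b)\}$; in each of the symmetric sub-cases it reduces to two one-line inequalities in $a$ and $b$. Everything else is linearity plus an explicit construction.
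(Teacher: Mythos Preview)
Your proposal is correct and follows essentially the same route as the paper. For necessity, both you and the paper reduce to a single species; the paper then verifies $\gamma_{XX}^j-\gamma_{XY}^j\geq 0$ via the two-line chain $p_A(1)[1-p_A(1)]-p_{AB}(1,1)+p_A(1)p_B(1)\geq p_A(1)[p_B(1)-p_{AB}(1,1)]\geq 0$ (using $p_A(1)\geq p_{AB}(1,1)$ twice), whereas you bound $|C|$ via the Fr\'echet inequalities and a short case split---equivalent arguments. For sufficiency, the paper uses the same decomposition $\gamma=\gamma_{XY}\bigl(\begin{smallmatrix}1&1\\1&1\end{smallmatrix}\bigr)+\mathrm{diag}(\gamma_{XX}-\gamma_{XY},\gamma_{YY}-\gamma_{XY})$, realizing the diagonal part with a single product-distributed species rather than your two one-sided species; this is cosmetic.
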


\begin{proof}
Let us first prove the necessity of eqs. (\ref{basic}). By convexity, it is enough to see that the inequalities hold for each of the terms $j$ of the sums in eq. (\ref{matriz}). Then, written in terms of $p_A(1),p_B(1),p_{AB}(1,1)$, the contribution of each term to $\gamma_{XX}$ and $\gamma_{XY}$ (omitting $j$) is equal to $(\lambda_1-\lambda_2)^2p_A(1)[1-p_A(1)]$ and $(\lambda_1-\lambda_2)^2[p_{AB}(1,1)-p_A(1)p_B(1)]$. Subtracting both expressions -and ignoring the non-negative $(\lambda_1-\lambda_2)^2$ factor-, we end up with 

\begin{eqnarray}
&p_A(1)[1-p_A(1)]-p_{AB}(1,1)+p_A(1)p_B(1)\geq\nonumber\\
&\geq p_{AB}(1,1)[1-p_A(1)]-p_{AB}(1,1)+p_A(1)p_B(1)=\nonumber\\
&=p_A(1)[p_B(1)-p_{AB}(1,1)]\geq 0.
\end{eqnarray}

We have thus demonstrated that $\gamma_{XX}-\gamma_{XY}\geq 0$. The rest of the inequalities are proven analogously.

To show the opposite implication, suppose that $\gamma$ is such that the inequalities (\ref{basic}) are satisfied, and assume that $\gamma_{XY}\geq 0$ (the case $\gamma_{XY}\leq 0$ is very similar). Then, 

\be
\gamma=\gamma_{XY}\left(\begin{array}{cc}1&1\\1&1\end{array}\right) + \left(\begin{array}{cc}\gamma_{XX}-\gamma_{XY}&0\\0&\gamma_{YY}-\gamma_{XY}\end{array}\right).
\ee

The first covariance matrix on the right-hand side of the above equation can be attained by a species of particles following the microscopic distribution $p_A(c)=p_B(c)=p_{AB}(c,c)=1/2$, for $c=0,1$. The diagonal matrix corresponds to a situation where the particles composing the pair are independently distributed. Adding to the previous ensemble a new species of particle pairs with 

\begin{eqnarray}
&(\lambda'_1-\lambda'_2)^2p'_A(0)(1-p'_A(0))=\gamma_{XX}-\gamma_{XY},\nonumber\\
&p'_B(0)(1-p'_B(0))(\lambda'_1-\lambda'_2)^2=\gamma_{YY}-\gamma_{XY},\nonumber\\
&p'_{AB}(0,0)=p'_{A}(0)p'_{B}(0),\nonumber
\end{eqnarray}

\noindent we thus manage to reproduce $\gamma$.

\end{proof}

Note that, for any $r,\epsilon>0$, any gaussian bivariate distribution with covariance matrix of the form

\be
\gamma_{XX}=1/r,\gamma_{XY}=1,\gamma_{YY}=r+\epsilon
\label{useful}
\ee
 
\noindent violates the inequalities (\ref{basic}) as long as $r>1$. It follows that not all bivariate gaussian distributions can be generated by sums of independent pairs of identical two-valued particles. The analysis of macroscopic continuous data can thus give us important clues about the microscopic structure that lies underneath.

On a foundational side, this result implies that Alice and Bob can design a quantum optics experiment in order to disprove that entangled gaussian states of light follow a microscopic dichotomic model. We will come back to this topic in Section \ref{uniform}.

\section{Fixed output structure}
\label{fixed_output}

Our aim here is to generalize the problem posed in the previous section. That is, given a set of gaussian variables $(X_1,X_2,...,X_n)$, we consider the problem of extracting information about the underlying microscopic model that gave rise to them. Equivalently, we are interested in how different restrictions on such a microscopic model translate to the macroscopic scale. Suppose, for instance, that we wonder if $(X_1,...,X_n)$ could derive from several independent $n$-tuples of $d$-level particles impinging on different detectors. We conjecture that each detector $k\in\{1,...,n\}$ is \emph{simple}, i.e., it produces a signal of magnitude $\lambda^k_a$ when a particle on state $a$ impinges on it (we will see later that this assumption can be relaxed). We can thus regard each particle impinging on detector $k$ as a $d$-valued microscopic variable $x_k\in\{\lambda^k_a\}_{a=1}^d$.

Suppose now that we have partial knowledge of our measurement devices. Specifically, imagine that we know the sets of values $\{\lambda^k_a\}$ up to a proportionality constant. Such is the case, for instance, when the response function of each detector is proportional to the spin of the incident particle, but we ignore the precise value of the coupling constant (see section \ref{uniform}). In that case, we could take $\lambda^k_a\propto(a-d/2)$.

For fixed $\Lambda\equiv\{\lambda^k_a:k=1,...,n;a=1,...,d\}$, we want to determine which $n$-variate gaussian distributions can be generated by independent tuples of correlated microscopic variables with spectrum proportional to $\Lambda$. In the generic case (whenever each set $\{\lambda^k_a\}_a$ has positive and negative values), it is easy to see that the  expectations $v_i=\langle X_i\rangle$ can be varied at will without modifying the $n\times n$ covariance matrix $\gamma_{ij}\equiv \langle X_iX_j\rangle-\langle X_i\rangle\langle X_j\rangle$ of the macroscopic system. As before we will focus in the information about $\Lambda$ contained in $\gamma$.

As in the previous section, we want to allow for the possible existence of different types of particles, each of them with their own values $\lambda^k_a$. Our aim, thus, is to characterize those covariance matrices that can be expressed as

\be
\gamma_{ij}\propto \sum_{l=1}^N \sum_{a,b=1}^d \lambda^{i}_{a}\lambda^{j}_{b}\{p^l_{ij}(a,b)-p^l_{i}(a)p^l_j(b)\}.
\ee

In this expression, $l$ indexes the different $n$-tuples $(x^l_1,...,x^l_n)$, and $\lambda_{a}^{i}$ denotes the $a^{th}$ possible value of the variable $x^l_i$. $p_{i,j}^l(a,b)$ is the probability that variables $x^l_i$, $x^l_j$ attain the values $\lambda^i_a$, $\lambda^j_b$, respectively. With a slight abuse of notation, by $p_{jj}(a,a')$ we mean $\delta_{aa'}p_j(a)$. Note that, for any $S,T\in \R$, the families of covariance matrices attainable with the sets of outcomes $\{\lambda^k_i\}$ and $\{S\cdot \lambda^k_i+T\}$ are the same.

In order to solve the above problem, as well as some others that will appear along the article, we have to introduce the concept of \emph{naked covariance matrix}.

\subsection{The naked covariance matrix}
\label{nakedness}

Imagine an idealized detector that, when impinged by a $d$-level particle, sends a signal of magnitude 1 to one counter or another depending on the state of such a particle. If we were to send a beam of particles of the same species to this detector, the macroscopic current $X^a_k$ measured on each counter $(k,a)$ would thus indicate the number of particles of the beam $k$ in a particular state $a$. We will call such a device a \emph{naked detector}.

Note that, if the values of $\{\lambda^k_1,..., \lambda^k_d\}$ are known, we can simulate the behavior of a simple detector by means of a naked detector, see Figure \ref{naked_dressed}. Indeed, the single current of the former on each round of experiments could be determined by summing up the currents registered on each arm of the naked detector, each with its appropriate weight. That is, $X_k=\sum_a\lambda^k_aX^a_k$. If we were able to characterize the cone of all possible covariance matrices $\Gamma^{ab}_{ij}\equiv \langle X^a_iX^b_j\rangle-\langle X^a_i\rangle\langle X^b_j\rangle$ generated by $n$-tuples of $d$-valued variables impinging on $n$ different naked detectors, we could thus describe the set of covariance matrices generated by simple detectors.

\begin{figure}
  \centering
  \includegraphics[width=13 cm]{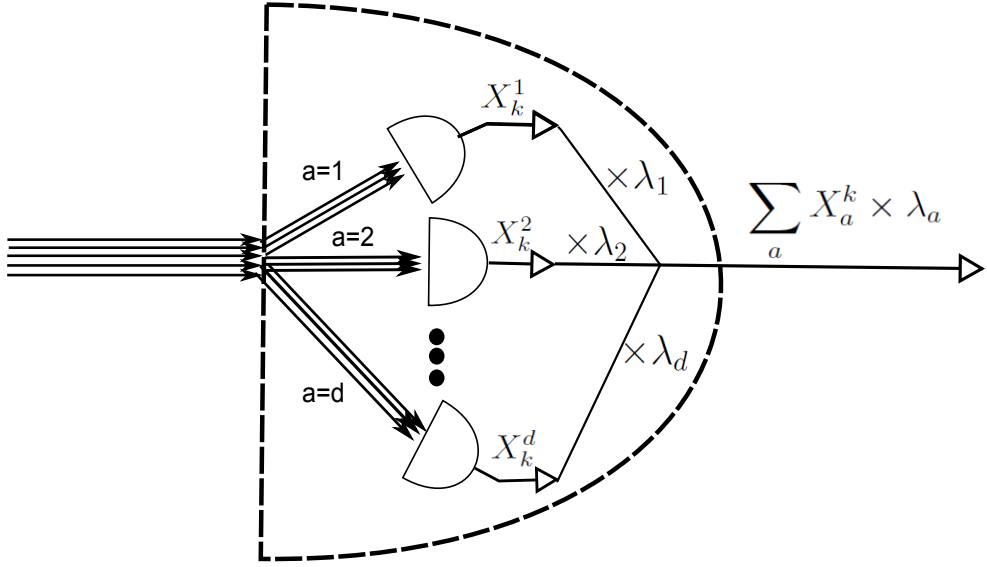}
  \caption{\textbf{Behavior of a naked detector.} A naked detector divides the original beam $k$ into sub-beams of particles in the same state $a$, and then measures the intensity $X^k_a$ of each sub-beam. As indicated in the figure, we can simulate a simple detector by postprocessing the outputs of a naked detector.}
  \label{naked_dressed}
\end{figure}

Fortunately, such a characterization is possible. The next technical result will be used extensively along the rest of the paper.

\begin{theo}
\label{naked}
Let $\Gamma$ be an $nd\times nd$ matrix. For any pair of lists of outcomes $\vec{c},\vec{c}'\in\{1,...,d\}^n$, consider the microscopic distributions

\be
p(a_1,a_2,...,a_n)=\frac{1}{2}\prod_{k=1}^n\delta(\vec{c},\vec{a})+\frac{1}{2}\prod_{k=1}^n\delta(\vec{c}',\vec{a}),
\ee

\noindent where $\delta(s,s')$ denotes the Kronecker delta, and call $\Gamma(\vec{c},\vec{c}')$ their associated naked covariance matrices.

Then $\Gamma$ is a naked covariance matrix iff it belongs to the cone generated by $\{\Gamma(\vec{c},\vec{c}'):\vec{c},\vec{c}'\in\{1,...,d\}^n$.

\end{theo}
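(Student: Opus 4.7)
The ``if'' direction is immediate and I would dispatch it first. Each $\Gamma(\vec{c},\vec{c}')$ is itself a naked covariance matrix, corresponding to a single species of $n$-tuples whose joint distribution is the two-point measure $\frac{1}{2}\delta(\vec{c},\vec{a}) + \frac{1}{2}\delta(\vec{c}',\vec{a})$. Since the cone of naked covariance matrices is closed under non-negative combinations (we may always add independent species of particles, each with its own multiplicity $N_l$), any conic combination of the $\Gamma(\vec{c},\vec{c}')$'s is realized as the naked covariance matrix of a suitable multi-species ensemble.

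For the ``only if'' direction, the structural observation is that a general naked covariance matrix is a non-negative sum over species $l$ of the single-distribution covariance matrices $N_l[p^l_{ij}(a,b)-p^l_i(a)p^l_j(b)]$, so it suffices to show that, for any single probability distribution $p$ on $\{1,\dots,d\}^n$, the matrix $\Gamma^{ab}_{ij} = p_{ij}(a,b)-p_i(a)p_j(b)$ already lies in the cone generated by the $\Gamma(\vec{c},\vec{c}')$'s. The key tool is the standard symmetrization identity for covariances:
\begin{equation*}
\mathrm{Cov}_p(X,Y) \;=\; \frac{1}{2}\sum_{\vec{c},\vec{c}'} p(\vec{c})\,p(\vec{c}')\,\bigl(X(\vec{c})-X(\vec{c}')\bigr)\bigl(Y(\vec{c})-Y(\vec{c}')\bigr),
\end{equation*}
applied to the indicator random variables $X(\vec{a})=\delta(a_i,a)$ and $Y(\vec{a})=\delta(a_j,b)$.

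A direct computation, which I would do at the outset so it is available for the comparison, shows that $\Gamma(\vec{c},\vec{c}')^{ab}_{ij}$ equals the rank-one expression $\frac{1}{4}\bigl(\delta(c_i,a)-\delta(c'_i,a)\bigr)\bigl(\delta(c_j,b)-\delta(c'_j,b)\bigr)$. Plugging the indicator choices into the symmetrization identity and matching factors then gives the explicit decomposition
\begin{equation*}
\Gamma^{ab}_{ij} \;=\; \sum_{\vec{c},\vec{c}'} 2\,p(\vec{c})\,p(\vec{c}')\,\Gamma(\vec{c},\vec{c}')^{ab}_{ij},
\end{equation*}
with manifestly non-negative coefficients $2p(\vec{c})p(\vec{c}')\geq 0$. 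Summing this identity over species finishes the proof. There is no real obstacle here beyond keeping the index bookkeeping straight when verifying the rank-one form of $\Gamma(\vec{c},\vec{c}')$; no convex-duality or separating-hyperplane argument is needed, because the decomposition is produced constructively by the symmetrization formula.
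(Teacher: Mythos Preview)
Your proof is correct and follows essentially the same route as the paper's: the paper linearizes the covariance by passing to the doubled product measure $p\otimes p$, observes its symmetry under swapping the two copies, and decomposes it into symmetrized point masses---which is precisely the derivation of the symmetrization identity $\mathrm{Cov}(X,Y)=\tfrac12\sum_{\vec c,\vec c'}p(\vec c)p(\vec c')(X(\vec c)-X(\vec c'))(Y(\vec c)-Y(\vec c'))$ that you invoke directly. Your packaging is cleaner, since you recognize this as a standard formula and skip straight to the explicit conic coefficients $2p(\vec c)p(\vec c')$, but the underlying mechanism is identical.
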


\begin{proof}
The left implication is evident. For the opposite one, it is enough to prove that we can recover all matrices generated by a \emph{single} microscopic distribution $p_{1,...,n}(a_1,a_2,...,a_n)$. In that case, all the entries of the naked covariance matrix are of the form 

\be
\Gamma_{ij}^{a,b}\equiv\langle X_j^aX_k^b\rangle-\langle X_j^a\rangle\langle X_k^b\rangle=p_{jk}(a,b)-p_{j}(a)p_k(b), 
\ee

\noindent where, with a slight abuse of notation, by $p_{jj}(a,a')$ we mean $\delta_{aa'}p_j(a)$. Note that we can write $\Gamma_{ij}^{a,b}$ as the difference of two probabilities. Indeed, imagine that we extend our system of $n$ $d$-valued variables $(a_1,a_2,...,a_n)$ to a system of $2n$ $d$-valued variables $(a_1,a_2,...,a_n,a_{n+1},...,a_{2n})$ subject to the distribution $p(a_1,...,a_n,a_{n+1},...,a_{2n})=p_{1,...,n}(a_1,...,a_n)\cdot p_{1,...,n}(a_{n+1},...,a_{2n})$. Then, $\Gamma_{ij}^{a,b}=p_{i,j}(a,b)-p_{i,j+n}(a,b)$, and the non-linearity on the probability distribution is eliminated. Now $p(a_1,...,a_n,a_{n+1},...,a_{2n})$ is symmetric with respect to the interchange $a_i\leftrightarrow a_{i+n}$. Consequently, it can be written as a convex sum of distributions of the form

\begin{eqnarray}
q(a_1,...,a_n,a'_1,...,a'_{n})=\frac{1}{2}\delta(\vec{c},\vec{a})\delta(\vec{c}',\vec{a}')+\frac{1}{2}\delta(\vec{c}, \vec{a}')\delta(\vec{c}',\vec{a}),
\end{eqnarray}

\noindent where $\vec{c},\vec{c}'\in\{0,1,...,d-1\}^n$. The contribution of each of them to $\Gamma$ is thus proportional to

\begin{eqnarray}
q_{i,j}(a,b)-q_{i,j+n}(a,b)=&&\frac{1}{2}\{\delta(c_i,a)\delta(c_j,b)+\delta(c'_i,a)\delta(c'_j,b)-\nonumber\\
&&-\delta(c_i,a)\delta(c'_j,b)-\delta(c'_i,a)\delta(c_j,b)\}.
\end{eqnarray}

On the other hand,

\begin{eqnarray}
&\Gamma_{ij}^{ab}(\vec{c},\vec{c}')=\frac{1}{2}\{\delta(c_i,a)\delta(c_j,b)+\delta(c'_i,a)\delta(c'_j,b)\}-\nonumber\\& -\frac{1}{4}(\delta(c_i,a)+\delta(c'_i,a))(\delta(c_j,b)+\delta(c'_j,b))=\frac{1}{2}\{q_{i,j}(a,b)-q_{i,j+n}(a,b)\}.
\end{eqnarray}

\noindent $\Gamma$ therefore lies inside the cone of $\{\Gamma(\vec{c},\vec{c}')\}$.

\end{proof}

Now, let $\gamma$ be a covariance matrix generated by microscopic variables variables with known spectrum $\{\lambda^k_i\}$. Since $\gamma_{ij}=\sum_{a,b=1}^d\lambda^i_a\lambda^j_b\Gamma^{ab}_{ij}$ for some naked covariance matrix $\Gamma$, Theorem \ref{naked} tells us that $\gamma$ is inside the cone generated by the covariance matrices 

\be
\gamma_{ij}(\vec{c},\vec{c}')\equiv \sum_{a,b=1}^d\lambda^i_a\lambda^j_b\Gamma_{ij}^{ab}(\vec{c},\vec{c}')=\vec{w}(\vec{c},\vec{c}')\vec{w}(\vec{c},\vec{c}')^T,
\ee

\noindent where $\vec{w}(\vec{c},\vec{c}')\in \R^n$ satisfies $w(\vec{c},\vec{c}')_i=\lambda^i_{c_i}-\lambda^i_{c'_i}$.

The problem of determining if $\gamma$ admits a microscopic model with known outcome structure $\{\lambda^k_i\}$ can therefore be cast as a linear program.

An immediate consequence of this observation is that, as long as $d<\infty$ and $\{\lambda^k_i\}$ is fixed, there are always bivariate gaussian distributions impossible to attain with such a microscopic model. Indeed, suppose that we normalize our two macroscopic variables $X_1,X_2$ in such a way that their associated $2\times 2$ covariance matrix $\gamma$ satisfies the condition 

\be
\tr(\gamma)=1.
\label{normal}
\ee

Then, one can consider which values of $\gamma_{12}$ are attainable for a fixed value $\gamma_{11}$. For a completely general gaussian distribution, the only condition over $\gamma$ is non-negativity. This, together with the normalization condition, implies that $|\gamma_{12}|\leq\sqrt{\gamma_{11}(1-\gamma_{11})}$: in this normalized scenario, the set of attainable points $(\gamma_{11},\gamma_{12})$ is thus a circle, see Fig. \ref{eq_spaced}.

However, suppose that $X_1,X_2$ are generated by independent pairs with outcomes $\{\lambda^1_a:a=1,...,d\}, \{\lambda^2_a:a=1,...,d\}$. According to what we have seen, there exist $\mu_{\vec{c},\vec{c}'}\geq 0$ such that

\be
\gamma=\sum_{\vec{c},\vec{c}'}\mu_{\vec{c},\vec{c}'}\gamma(\vec{c},\vec{c}')
\ee

\noindent where each $\gamma(\vec{c},\vec{c}')=\vec{w}\vec{w}^T$, with

\be
\vec{w}=\left(\begin{array}{c}\lambda^1_{c_1}-\lambda^1_{c'_1}\\ \lambda^2_{c_2}-\lambda^2_{c_2'}\end{array}\right).
\ee

It is obvious that $\tr(\gamma(\vec{c},\vec{c}'))=0$ implies $\gamma(\vec{c},\vec{c}')=0$, so these pairs of $\vec{c}, \vec{c}'$ do not contribute to $\gamma$. For the rest, define $\tilde{\gamma}(\vec{c},\vec{c}')=\gamma(\vec{c},\vec{c}')/\tr(\gamma(\vec{c},\vec{c}'))$, and note that all $\tilde{\gamma}$'s are such that $\tilde{\gamma}_{11},\tilde{\gamma}_{12}$ are in the circumference $|\gamma_{12}|=\sqrt{\gamma_{11}(1-\gamma_{11})}$. Also, we have that

\be
\gamma=\sum_{\vec{c},\vec{c}'}\tilde{\mu}_{\vec{c},\vec{c}'}\tilde{\gamma}(\vec{c},\vec{c}'),
\label{convex}
\ee

\noindent and so $\sum_{\vec{c},\vec{c}'}\tilde{\mu}_{\vec{c},\vec{c}'}=1$. It follows that $\gamma$ is a convex combination of a finite number of points in the circumference of the circle. The set of all covariance matrices arising from the microscopic model $\{\lambda^k_i\}$ is therefore a polygonal inner approximation to the circle corresponding to general gaussian distributions, and so the latter contains points separated from the former as long as $d<\infty$.

\subsection{An application: two-dimensional Brownian motion}

Due to molecular collisions, a particle floating in a two-dimensional fluid will experience random kicks that will make it move in unpredictable ways. This phenomenon is known as \emph{Brownian motion}, in honor of its discoverer, the botanist Robert Brown. The first theoretical explanation of this effect is due to Albert Einstein \cite{einstein}, who also proposed to use it to estimate the size of fluid molecules.

Mathematically, if an stochastic process $B(t)_{t\geq 0}=(X(t), Y(t))_{t\geq 0}$ is a bivariate brownian motion then there exists a vector $\mu=(\mu_X , \mu_Y)\in \mathbb R^2$ and a covariance matrix $\gamma$ such that for every $t, h\geq 0$,  the increment
$B(t+h)-B(t)$ is a bivariate normal distribution with mean $h \mu$ and covariance matrix $h \gamma$. Conversely, given any such $\mu, \gamma$, we can associate to them a bivariate brownian motion.

It is well known and widely used that the one dimensional Brownian motion can be well approximated by a one dimensional random walk where each of the Bernouilli variables takes the values $\pm \lambda$ for a suitable $\lambda$. Curiously enough, the previous section implies that the same result does not hold for a two dimensional Brownian motion, no matter how the values $\lambda_X, \lambda_Y$ of the Bernuoilli variables are fixed. Indeed, note that our setting is perfectly adapted for the study of random walks, since they arise as the sum of many independent dichotomic variables.

Suppose now that we relax the definition of Brownian motion to account for the fact that in practice we cannot measure a system between arbitrarily small time intervals $h$. Then one can allow the Bernouilli distribution to vary in time, as long as the period $T$ of such a distribution satisfies $h\gg T$; in these conditions, the macroscopic distribution $B(t)-B(t-h)$ is approximately homogeneous in time. 

Even in this more general scenario, our previous results have something to say. Let $\lambda_X=\lambda_Y$. Then we can apply the results from Section \ref{simple_example}, i.e., $\gamma$ must fulfill $\tr(\gamma W_{ij})\geq 0$, for $i,j=1,2$, where 
$$W_{ij}=\begin{pmatrix} 1+(-1)^j& (-1)^i\\
(-1)^i& 1+(-1)^{j+1}\end{pmatrix}.$$

\noindent If we further want our model to verify the stronger condition that the choice of our $X,Y$ axes can be done at will\footnote{That is, for any pair of orthonormal axes $\hat{x}',\hat{y}'$ which we use to describe the movement of the particle, there exists a two-dimensional random walk (with $\lambda_X=\lambda_Y$) compatible with our macroscopic observations.}, we arrive at the extra constraints $\tr(O\gamma O^T W_{ij})\geq 0$, for all $O\in O(2)$. A bit of algebra shows that this last (necessary and sufficient) condition is equivalent to:

\be
\mbox{det}(\gamma)\geq\frac{1}{8}\tr^2(\gamma).
\ee

\subsection{The uniform case: measuring spins}
\label{uniform}

Let us study a case of physical significance: Alice and Bob perform a quantum optics experiment in which homodyne measurements are performed on each side of a bipartite gaussian state of light, returning a pair of gaussian variables $X_1,X_2$. Having discarded the microscopic dichotomic model (see Section \ref{simple_example}), they resort to more complicated discrete microscopic models in order to explain their observations. That is how they come up with the \emph{microscopic spin model}, where $X_1,X_2$ are postulated to be the result of summing up the spins of $N$ independent $(d-1)/2$-spin particle pairs, with $N\gg 1$ (note that for $d=2$ we recover the microscopic dichotomic model). That is, there is a microscopic explanation involving only $d$ dimensional quantum systems (Hilbert spaces). We assume that Alice's and Bob's detectors behave identically, i.e., whenever a particle with spin component $m$ passes through each detector, the corresponding microscopic signal will be equal to $g\cdot m$, where the value of the constant $g$ is unknown. We wonder which gaussian states are compatible with such a microscopic model.

Notice that, in this case, $\lambda^1_a=\lambda^2_a\equiv \lambda_a$, i.e. the spectrum of the microscopic variables $x_1,x_2$ is the same. Also, we can take $\{\lambda_a\}$ to satisfy $\lambda_a-\lambda_b=a-b$. Upon imposing the normalization constraint (\ref{normal}), as long as $\gamma\not=0$, the covariance matrix will admit a decomposition of the form (\ref{convex}), with

\be
\tilde{\gamma}_{11}=\frac{(c_1-c'_1)^2}{(c_1-c'_1)^2+(c_2-c'_2)^2},\tilde{\gamma}_{12}=\frac{(c_1-c'_1)(c_2-c'_2)}{(c_1-c'_1)^2+(c_2-c'_2)^2}.
\ee

Now, we know that the point $(0,0)$ of the circle is accessible [take $c_1=c_1'$]. Our aim now is to determine the closest extreme point $\tilde{\gamma}(\vec{c},\vec{c}')$ in the circumference. That is, we want to find out

\be
\gamma_{11}^*\equiv\min_{\vec{c},\vec{c}'} \{\tilde{\gamma}(\vec{c},\vec{c}')_{11}:\tilde{\gamma}(\vec{c},\vec{c}')_{11}\not=0\}.
\ee

\noindent It is not difficult to see that $\gamma_{11}^*=\frac{1}{1+(d-1)^2}$. 

\begin{figure}
  \centering
  \includegraphics[width=20 cm]{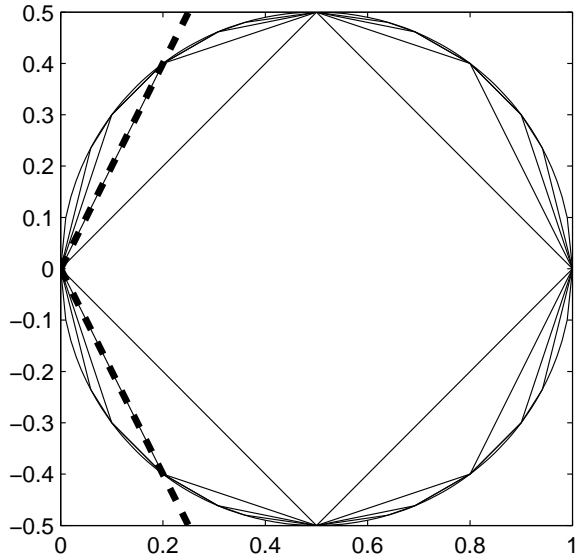}
  \caption{\textbf{Equally-spaced outcomes.} A plot of the accessible regions $(\gamma_{11},\gamma_{12})$ of normalized covariance matrices generated by microscopic variables with values $\{1,...,d\}$, for $d=2,3,4,5$. The circle with center in $(1/2,0)$ and radius $1/2$ encompassing them all represents the set of general gaussian distributions. The dashed line corresponds to inequality (\ref{spin}) for $d=3$.}
  \label{eq_spaced}
\end{figure}

Correspondingly, $\gamma_{12}^*=\pm\frac{d-1}{1+(d-1)^2}$. It follows that (see Fig. \ref{eq_spaced}) the inequality 

\be
|\gamma_{12}|\leq (d-1)\gamma_{11}
\label{spin}
\ee

\noindent holds for all gaussian distributions arising from pairs of $(d-1)/2$-spin particles, normalized or not.

On the other hand, note that this inequality is violated by all gaussian distributions with covariance matrix of the form (\ref{useful}) as long as $r>(d-1)$. 

From a foundational perspective, eq. (\ref{spin}) implies that, for any value of $d$, there exists a quantum optics experiment that proves that gaussian states of light do not follow the microscopic $(d-1)/2$-spin model. Due to the limitations of current technology, though, this claim can only be checked up to a finite value $d_t$.

\subsection{Incomplete output bases}

As $d$ grows, inequality (\ref{spin}) becomes more and more irrelevant. And, actually, one can prove that in the limit $d\to\infty$ equally spaced $d$-level systems (with positive and negative values) can reproduce any multivariate gaussian distribution. Indeed, suppose that we want to approximate the gaussian distribution $p(X_1,...,X_n)d\vec{X}$, with covariance matrix $\gamma$ and expectation vector $\vec{v}$, by means of $d$-valued microscopic variables. From previous considerations, we only have to worry about reproducing the covariance matrix $\gamma$ via microscopic ensembles. Now, $\gamma$ is positive semidefinite, and so $\gamma_{kl}=\vec{w}^k\cdot \vec{w}^l$, for some vectors $\{\vec{w}^k\}_{k=1}^n\subset \R^n$ such that $\|\vec{w}^k\|_{\infty}<L$ for some $L$. Define then the vectors $\{\vec{u}^k\}$ in such a way that $u^k_i=\frac{L}{d}\lfloor\frac{d}{L}w^k_i\rfloor$. It is clear that the new covariance matrix $\tilde{\gamma}_{kl}\equiv \vec{u}^k\cdot\vec{u}^l$ can be generated by ensembles of microscopic systems with outcomes of the form $\{\lambda_a\propto a-d:a=0,...,2d\}$. Moreover, $\lim_{d\to\infty}\tilde{\gamma}=\gamma$.

Assuming, as before, that $\lambda^i_a=\lambda^j_a$, $\forall i,j\in \{1,...,n\}$, one can use the same argument to prove that, as long as our sequence of microscopic outcomes $\{\lambda_a:a\in \N\}$ satisfies

\be
\lim_{a\to\infty}\frac{\lambda_{a+1}-\lambda_{a}}{\lambda_{a}-\lambda_{a-1}}\to 1,
\ee

\noindent any gaussian multivariate distribution can be approximated by sums of microscopic ensembles. This includes, in particular, the case $\lambda_a=f(a)$, with $f$ being a non-trivial (i.e., non-constant) rational function of $a$.

In view of this, it is natural to ask whether there exist sequences of infinitely many different outcomes $\{\lambda_a:a\in \N\}$ which do not allow to approximate certain gaussian distributions. 

At first glance, one would be tempted to answer this question in the negative: an infinity of possible outcomes gives us infinitely many degrees of freedom to play with. In such circumstances, it is difficult to see how, for a given covariance matrix $\gamma$, there could \emph{not} be a combination of infinitely many weights that reproduces or approximates $\gamma$ at the microscopic scale. On second thoughts, though, it could be that if the different outcomes are not sufficiently `spread out', then they could not generate every distribution. 

This is actually the case, as shown in Fig. \ref{expon}, when the outcomes happen to be the terms of a geometric sequence. The next proposition states it more clearly:

\begin{prop}
Let $\tilde{G}_\mu$ be the set of all normalized bivariate covariance matrices $\gamma$ arising from sums of pairs of variables with spectrum $\{\lambda_a=\mu^a:a\in \N\}$. If $\mu>\frac{3+\sqrt{2}}{2}$, then the angle $\theta$ of the extreme points with respect to the horizontal axis (see Figure \ref{expon}) satisfies

\be
\sin(\theta)\not\in \Theta_\mu\equiv\left(\frac{2(\mu-1)}{(\mu-1)^2+1},\frac{2\mu(\mu-1)}{(\mu-1)^2+\mu^2}\right).
\label{forbidden}
\ee
\end{prop}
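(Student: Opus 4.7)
The plan is to reduce everything to an elementary statement about the ratio $r=(\mu^{c_1}-\mu^{c_1'})/(\mu^{c_2}-\mu^{c_2'})$. By the parametrization of extreme points in Section~\ref{nakedness}, each $\tilde\gamma(\vec c,\vec c')$ lies on the circle of Figure~\ref{eq_spaced} at an angle $\theta$ satisfying $\sin\theta=2r/(1+r^2)\equiv f(r)$. Since $f(r)=f(1/r)$ I may restrict to $r\geq 1$, and pulling out the smallest power of $\mu$ from numerator and denominator puts every achievable $r$ in the canonical form $r=\mu^k(\mu^m-1)/(\mu^n-1)$ with $k\in\Z$ and $m,n\geq 1$.

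The heart of the argument will be the two-sided estimate
\[
\mu^{m-n}\;\leq\;\frac{\mu^m-1}{\mu^n-1}\;<\;\mu^{m-n}\cdot\frac{\mu}{\mu-1}\qquad(m\geq n\geq 1).
\]
The left inequality cross-multiplies to $\mu^{m-n}\geq 1$, and the right follows from $(1-\mu^{-m})/(1-\mu^{-n})<1/(1-\mu^{-1})=\mu/(\mu-1)$. Setting $p=k+m-n$, this confines every achievable $r\geq 1$ to one of the intervals $[\mu^p,\mu^p\,\mu/(\mu-1))$ indexed by $p\in\Z_{\geq 0}$; since $\mu>(3+\sqrt{2})/2>2$, consecutive such intervals are separated by the genuine gap $[\mu^p\,\mu/(\mu-1),\mu^{p+1})$.

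To finish, I will translate $\Theta_\mu$ back to the $r$-axis. Because $f$ is strictly decreasing on $[1,\infty)$, the preimage of $\Theta_\mu$ in $r\geq 1$ is the open interval whose endpoints are $\mu-1$ and $\mu/(\mu-1)$. When $\mu>(3+\sqrt{5})/2$ the ordering $\mu-1>\mu/(\mu-1)>1$ makes $\Theta_\mu$ a genuine nonempty interval with $r$-preimage $(\mu/(\mu-1),\mu-1)$, which sits strictly inside the first gap $[\mu/(\mu-1),\mu)$ (since $\mu-1<\mu$) and is therefore disjoint from every achievable $r$. When $(3+\sqrt{2})/2<\mu\leq(3+\sqrt{5})/2$ the two endpoints of $\Theta_\mu$ lie in the opposite order, so $\Theta_\mu$ read as a standard open interval is empty and the claim is vacuous. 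The main subtlety will be verifying that the upper estimate on $(\mu^m-1)/(\mu^n-1)$ is strict, so that the forbidden $r$-interval sits strictly inside the gap and never abuts an achievable level.
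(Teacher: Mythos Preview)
Your reduction to the ratio $r$ and the translation of $\Theta_\mu$ back to the $r$-axis are correct, and this is essentially the same idea as the paper's proof: both arguments boil down to showing that the ratio $(\mu^a-\mu^{a'})/(\mu^b-\mu^{b'})$ avoids a certain open interval, via the elementary bound $(1-\mu^{-m})/(1-\mu^{-n})\in[(\mu-1)/\mu,\mu/(\mu-1)]$. Your observation that the statement is vacuous for $(3+\sqrt{2})/2<\mu\le(3+\sqrt{5})/2$ is also right.

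There is, however, a genuine gap. Your two-sided estimate is stated only for $m\ge n$, yet you then claim it ``confines every achievable $r\ge 1$'' to the intervals $[\mu^p,\mu^{p+1}/(\mu-1))$. The restriction to $r\ge 1$ (via $f(r)=f(1/r)$) does \emph{not} force $m\ge n$: for instance with $\mu=3$, $k=2$, $m=1$, $n=2$ one gets $r=\mu^2(\mu-1)/(\mu^2-1)=9/4$, which lies in none of your intervals (it falls in the ``gap'' $[\mu/(\mu-1),\mu)=[3/2,3)$). So the structural claim about achievable $r$'s is false as stated.

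The fix is easy: for $m<n$ the reciprocal estimate gives $(\mu-1)/\mu<(\mu^m-1)/(\mu^n-1)\cdot\mu^{n-m}\le 1$, hence $r\in(\mu^{p-1}(\mu-1),\mu^p]$ with $p=k+m-n$. For $r\ge 1$ this forces $p\ge 0$; if $p=0$ then $r=1$, while if $p\ge 1$ then $r>\mu-1$, and in either case $r\notin(\mu/(\mu-1),\mu-1)$. Alternatively, you could adopt the paper's cleaner case split on the \emph{leading} exponents $a$ versus $b$ (rather than on $m=a-a'$ versus $n=b-b'$): writing $r=\mu^{a-b}(1-\mu^{-(a-a')})/(1-\mu^{-(b-b')})$, the cases $a\ge b$ and $a<b$ immediately give $r\ge(\mu-1)/\mu$ and $r\le 1/(\mu-1)$ respectively, which is exactly the dichotomy you need without any further bookkeeping.
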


\noindent Note that, in the limit $\mu\to\infty$, $\sin(\theta)\not\in (0,1)$, i.e., the convex body collapses to the `bivalued' square given by eqs. (\ref{basic}). Notice also that this results holds too when the outcomes are taken from the bigger set $\{\mu^a:a\in\Z\}$.

\begin{figure}
  \centering
  \includegraphics[width=16.5 cm]{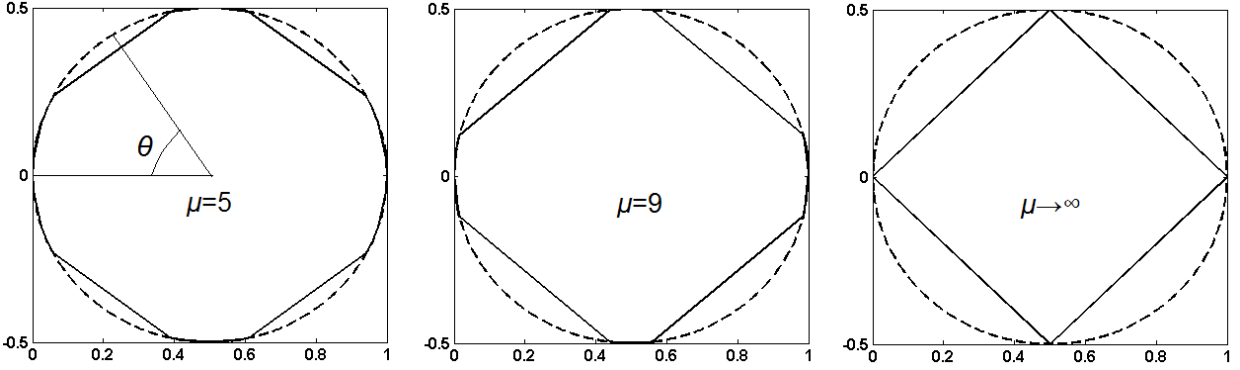}
  \caption{\textbf{Geometric outcomes.} A numerical plot of the accessible points $(\gamma_{11},\gamma_{12})$ of normalized covariance matrices generated by microscopic variables with values $\{\mu^a,a\in \Z\}$, for $\mu=5,9,\infty$. The dashed line delimits the circle of all normalized gaussian distributions. The first picture also depicts an extreme point of the circle with $\sin(\theta)\in \Theta_\mu$.}
  \label{expon}
\end{figure}

\begin{proof}
Any bivariate covariance matrix $\gamma\in\tilde{G}_\mu$ is a convex combination of normalized matrices $\gamma'$ of the form

\be
\gamma'_{11}\propto (\mu^{a}-\mu^{a'})^2,\gamma'_{22}\propto (\mu^{b}-\mu^{b'})^2,\gamma'_{12}\propto (\mu^{a}-\mu^{a'})(\mu^{b}-\mu^{b'}),
\ee

\noindent with $a,a',b,b'\in\N$ and $\gamma'_{11}+\gamma'_{22}\not=0$. We will next prove that, for any such matrix $\frac{\gamma'_{11}}{\gamma'_{22}}\not\in (1/(\mu-1)^2,(\mu-1)^2/\mu^2)$: the statement of the proposition will then follow from trivial calculations.

W.l.o.g. we can assume that $a> a',b> b'$, and then the constraint translates as $(\mu^a-\mu^{a'})/(\mu^b-\mu^{b'})\not\in(1/(\mu-1),(\mu-1)/\mu)$. There are two possibilities: either $a=b+\Delta$, with $\Delta\geq 0$, or  $a=b-\Delta'$, with $\Delta'>0$.

Let us examine the first one: if $a=b+\Delta$, for $\Delta\in \N$, then

\be
\frac{\mu^a-\mu^{a'}}{\mu^b-\mu^{b'}}=\mu^\Delta\frac{1-\frac{1}{\mu^{a-a'}}}{1-\frac{1}{\mu^{b-b'}}}\geq\left(1-\frac{1}{\mu}\right)\mu^\Delta\geq\frac{\mu-1}{\mu},
\ee

\noindent and so $\gamma_{11}'/\gamma_{22}'\not\in (1/(\mu-1),(\mu-1)/\mu)$.

Let us thus explore the other option, namely, that $a=b-\Delta'$, with $\Delta'\geq 1$. Then, we have that

\be
\frac{\mu^a-\mu^{a'}}{\mu^b-\mu^{b'}}= \mu^{-\Delta'}\frac{1-\frac{1}{\mu^{a}-\mu^{a'}}}{1-\frac{1}{\mu^{b}-\mu^{b'}}}\leq\frac{\mu}{\mu^{\Delta'}(\mu-1)}\leq \frac{1}{\mu-1},
\ee

\noindent and hence $\gamma_{11}'/\gamma_{22}'\not\in (1/(\mu-1),(\mu-1)/\mu)$.

\end{proof}

\noindent We will say that the set of values $\{\lambda_a=\mu^a:a\in \N\}$ constitutes an \emph{incomplete output basis}, since ensembles of microscopic systems with such an spectrum cannot be used to generate arbitrary covariance matrices.

In the last two subsections we have been considering symmetric scenarios where $\lambda^k_a=\lambda^{k'}_a$, for $k\not=k'$, i.e., where all microscopic systems have the same spectrum. However, the asymmetric case is also worth studying, and could play a role in future experiments to reveal or refute hidden quantization. For one, in asymmetric scenarios it is not necessary to resort to geometric sequences in order to find instances of incomplete output bases of infinite cardinality. Take, for example, the pair of spectra $\{\lambda^1_a=a:a=1,2,...\}$, $\{\lambda^2_b=1/b:b=1,2,...\}$. Proving that the condition $|\gamma_{12}|\leq \gamma_{11}$ holds in this case is left to the reader as an exercise.

\subsection{Imperfect detectors}

All previous results hold in case our detector acts in a deterministic way, that is, when our detector assigns a given signal $\lambda_a$ to each particle in state $a$. However, such is a quite unrealistic model for a typical detector behavior. One should rather expect that, depending on the state $a$ of the incident particle, our detector sends a signal of strength $\lambda$, following a certain probability distribution $\rho_a(\lambda)$. Coming back to the case where our detector consists of a polarizer connected to a photocounter, the corresponding probability distributions would be $\rho_1(\lambda)=\delta(\lambda), \rho_2(\lambda)=\eta\delta(\lambda-1)+(1-\eta)\delta(\lambda)$, where $\eta$ is the efficiency of the photocounter.

In this section we will show, though, that, at the level of covariance matrices, any probabilistic detector can be simulated by a deterministic one. Therefore, restricting our analyses to this latter case is more than justified.

Suppose, indeed, that each particle or microscopic variable $x_k$ is evaluated by an imperfect detector $k$, via the following correspondence $a\to \rho_a^k(\lambda)$. Let $p(X_1,...,X_k)$ be the gaussian distribution with covariance matrix $\gamma$ generated by the microscopic distribution $p(x_1,...,x_n)$. Defining $\langle \lambda\rangle_a^k$, $\langle \lambda^2\rangle_a^k$ as $\int d\lambda\rho_a^k(\lambda)\lambda$, $ \int d\lambda\rho_a^k(\lambda)\lambda^2$, respectively, it is straightforward that

\begin{eqnarray}
\gamma_{ij}\propto && \sum_{a,b}\left[p_{ij}(a,b)-p_i(a)p_j(b)\right]\langle\lambda\rangle_a^i\langle\lambda\rangle_b^j,\mbox{ if } i\not=j,\nonumber\\
&&\sum_{a,a'}\left[p_i(a)\delta_{aa'}-p_i(a)p_i(a')\right]\langle\lambda\rangle_a^i\langle\lambda\rangle_{a'}^i+K_i,\mbox{ otherwise},
\end{eqnarray}

\noindent where 

\be
K_i\equiv\sum_{a}p_i(a)\left(\langle\lambda^2\rangle^i_a-(\langle\lambda\rangle^i_a)^2\right)\geq 0.
\ee

It is therefore clear that if we replace each detector $k$ by the deterministic detector $a\to\langle\lambda\rangle_a^k$, we obtain a covariance matrix $\gamma'$ such that $\gamma-\gamma'$ is a diagonal positive semidefinite matrix. As long as for each $k$ there exists $a(k),b(k)\in\{1,...,d\}$ such that $\langle\lambda\rangle^k_{a(k)}\not=\langle\lambda\rangle^k_{b(k)}$ (and this condition is necessary for the macroscopic variable $X_k$ to be correlated to any other), one can then recover $\gamma$ by adding to the previous ensemble a certain number of $n$-tuples of microscopic variables independently distributed with $x_k\in\{\langle\lambda\rangle^k_{a(k)},\langle\lambda\rangle^k_{b(k)}\}$.

An immediate consequence of this correspondence between deterministic and probabilistic detectors is that relations (\ref{basic}) also hold for the latter class, provided that the two detectors involved in the experiment are identical.

\subsection{Variable number of variables}
\label{non_constant}

Till now, we have been assuming that our macroscopic variables $(X_1,...,X_n)$ are the result of adding up a given number of microscopic variables. Such an assumption, though, may not hold in reality: consider, for instance, an experimental situation where the process that triggers the production of particle $n$-tuples is the decay of a microscopic system (say, an atom). Imagine also that each atom in our sample decays during the experiment with probability $\nu<1$. In that case, our macroscopic signal would show no contributions from a particular atom with probability $1-\nu$. We could model the above situation by extending the detector response function (i.e., $\{\lambda_a\}_{a=1}^d\to \{\lambda_a\}_{a=1}^d\cup\{0\}$). That would be a gross simplification of the original setup, though, since it would allow the detectors to `not click' independently of each other. Fortunately, there is a better way to study this scenario.

Suppose that a given tuple of microscopic variables $x_1,...,x_n\in\{\lambda^1_a\}\times...\times\{\lambda^n_a\}$, with $\lambda^k_a\not=0$ for any $k,a$ is produced with probability $\nu$. Then one can check that their contribution to the macroscopic covariance matrix is of the form

\be
\gamma_{ij}\propto \nu\langle x_ix_j\rangle-\nu^2\langle x_i\rangle\langle x_j\rangle.
\ee

\noindent When $\nu=1$, we recover the usual formula, while for $\nu\ll 1$ we obtain a term proportional to $\langle x_ix_j\rangle=\sum_{a,b} p_{ij}(a,b)\lambda_a\lambda_b$. Clearly, an intermediate situation $\nu\sim 1$ is thus given by a conical combination of these two extreme cases.

If the value of $\nu$ is completely unknown and is allowed to vary between different tuples, any accessible covariance matrix $\gamma$ will hence admit a conical decomposition of the form

\be
\gamma_{ij}=\sum_{a,b,\vec{c},\vec{c}'}\mu_{\vec{c},\vec{c}'}\Gamma_{ij}^{a,b}(\vec{c},\vec{c'})\lambda^i_a\lambda^j_b+\sum_{a,b,\vec{c}}\mu_{\vec{c}}\delta(a,c_i)\delta(b,c_j)\lambda^i_a\lambda^j_b.
\label{inter_milan}
\ee

\noindent The problem of deciding if $\gamma$ can be generated by ensembles of microscopic variables which appear probabilistically in the sum defining the macroscopic variables can thus again be formulated as a linear program. Notice that the first (second) summand in (\ref{inter_milan}) shall be neglected if we postulate that $\nu\ll 1$ ($\nu\approx 1)$. Unless otherwise specified, along the rest of the paper we will always work under the assumption that $\nu=1$.

\section{Free output structure}
\label{free_output}
We have just studied the case where our macroscopic variables $X_1,...,X_n$ are generated by sums of independent $n$-tuples of microscopic variables with a known structure of values $\{\lambda^k_a:a=1,...,d, k=1,...,n\}$. Similarly, one could envision a related scenario where such microscopic variables have $d$ levels, but the concrete values associated to each level are unknown, and may even vary between the different tuples. The problem we propose is thus to characterize which gaussian distributions can arise from generic averages of $d$-level microscopic systems.

As it turns out, if we allow the set of outcomes $\{\lambda^{k,l}_a:a=1,...,d, k=1,...,n\}$ for each $n$-tuple $l$ to be completely arbitrary, then any $n$-variate gaussian distribution can be generated with 2-valued microscopic systems. Indeed, let $p(X_1,...,X_n)$ be a gaussian distribution with displacement vector $\vec{v}=\langle\vec{X}\rangle$ and covariance matrix $\gamma$. Since $\gamma\geq 0$, it admits a Gram decomposition, i.e., there exists a set of vectors $\{\vec{u}^i\}_{i=1}^n\subset \R^n$ such that $\gamma_{ij}=\vec{u}^i\cdot \vec{u}^j$. Now, define $\gamma^l_{ij}\equiv u^i_lu^j_l$, and note that 

\be
\gamma^l_{ij}\propto\sum_{a,b=1,2}\Gamma^{ab}_{ij}(\vec{1},\vec{2})\lambda^{i,l}_a\lambda^{j,l}_b,
\ee

\noindent with the values $\lambda^{k,l}_1=u^k_l, \lambda^{k,l}_2=0$. Here $\vec{1}$ [$\vec{2}$] denotes a vector of the form $(1,1,1,...)$ [$(2,2,2,...)$].

$\gamma$ is thus a conical combination of covariance matrices $\{\gamma^l\}_{l=1}^n$ of 2-valued microscopic variables. On the other hand, the displacement vector $\vec{v}$ can be modified by an arbitrary amount $\Delta\vec{v}$ without altering $\gamma$ by adding a constant $n$-tuple $(\Delta v_1,...\Delta v_n)$ to the ensemble. The initial gaussian distribution $p(X_1,...,X_n)$ can thus be completely recovered.

Nevertheless, between a complete knowledge of the outcome structure and a complete ignorance there exist natural intermediate situations where the problem of characterizing the resulting gaussian distributions becomes non-trivial. Suppose, for instance, that we impose the additional hypothesis that the microscopic variables in each $n$-tuple have the same structure, i.e., we postulate that $\lambda^{k,l}_a=\lambda^{k',l}_a$ for all $k,k'\in\{1,...,n\}$ and all $l$. This is a natural assumption when the sequence $X_1,X_2,...,X_n$ represents measurements of the same macroscopic variable $X$ at different times, i.e., $X_k=\sum_{l=1}^Nx^{(l)}(t_k)$. In such a situation, it is reasonable to postulate that the probability distribution of each microscopic variable $x^{(l)}$ evolves with time, while its set of possible values remains constant. This scenario may appeal to those interested in computational biology, where simple (i.e., not memory consuming) discrete mathematical idealizations of biological entities are sought to fit macroscopic time series. Also, a closely related assumption will be used in Section \ref{quantum_non_loc} to rule out the existence of finitely-valued classical models for certain families of quantum experiments.

From a mathematical point of view, the hypothesis of identical outcomes makes the problem non-trivial again: indeed, using Gram decomposition arguments, it is easy to infer that any multivariate gaussian distribution $p(X_1,...,X_n)$ can be generated by summing independent $n$-tuples of $n+1$-valued identical microscopic systems. The next proposition shows that the upper bound $n+1$ on the minimal number of outcomes is actually optimal.

\begin{prop}
\label{identity}
Let $p(X_1,...,X_n)$ be an arbitrary gaussian distribution with covariance matrix $\gamma$ generated by convolutions of microscopic distributions $p(x^{(l)}_1,...,x^{(l)}_n)$, with $x^{(l)}_k\in\{\lambda^{(l)}_a\}_{a=1}^n,\forall k$. Then,

\be
B(\gamma)\equiv\left(1-\frac{3}{n(4^n-1)}\right)\sum_{i=1}^n\gamma_{ii}-\frac{3}{4^n-1}\sum_{i,j=1}^n2^{i+j-2}\gamma_{ij}\geq 0.
\label{restrict}
\ee

\end{prop}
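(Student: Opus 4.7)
The plan is to reduce via Theorem~\ref{naked} to rank-one extremal generators $\vec w\vec w^T$, and then to recast the quadratic inequality $B(\vec w\vec w^T)\ge 0$ as a sharpening of Cauchy--Schwarz controlled by the cycle space of an auxiliary edge-incidence matrix. By Theorem~\ref{naked} applied tuple-by-tuple, any admissible $\gamma$ lies in the conic hull of rank-one matrices $\vec w\vec w^T$ with $w_i=\lambda_{c_i}-\lambda_{c_i'}$ for some $\vec\lambda\in\R^n$ and $\vec c,\vec c'\in\{1,\ldots,n\}^n$. Since $B(\cdot)$ is linear in $\gamma$, it suffices to verify $B(\vec w\vec w^T)\ge 0$ on each such generator. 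Setting $\vec u=(1,2,4,\ldots,2^{n-1})$, so that $\|\vec u\|^2=(4^n-1)/3$ and $\sum_{i,j}2^{i+j-2}w_iw_j=(\vec u\cdot\vec w)^2$, the target rearranges after clearing denominators to
\[
(\vec u\cdot\vec w)^2 \;\le\; \Bigl(\|\vec u\|^2-\tfrac{1}{n}\Bigr)\,\|\vec w\|^2,
\]
i.e.\ to Cauchy--Schwarz strengthened by an additive $1/n$ on the ratio $(\vec u\cdot\vec w)^2/\|\vec w\|^2$.

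Next I would identify the subspace of admissible $\vec w$. Let $B$ denote the $n\times n$ signed incidence matrix $B_{ik}=\delta_{k,c_i}-\delta_{k,c_i'}$, so that $\vec w=B\vec\lambda$ ranges over $V:=\mathrm{im}(B)\subset\R^n$; since $B\vec 1=0$, one has $\dim V\le n-1$. The inequality above is equivalent to $\|P_V\vec u\|^2\le\|\vec u\|^2-1/n$, and hence to $\|P_{V^\perp}\vec u\|^2\ge 1/n$, where $V^\perp=\ker(B^T)$ is the cycle space of the directed multigraph $G$ on $\{1,\ldots,n\}$ whose $i$-th edge goes from $c_i'$ to $c_i$.

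The crux is then to exhibit a single element of $V^\perp$ witnessing this $1/n$ of orthogonal mass. Because $G$ has $n$ vertices and $n$ edges, the standard formula $\dim\ker B^T=c\ge 1$ (with $c$ the number of connected components of the underlying undirected graph) forces $G$ to contain at least one simple cycle $C$, possibly a self-loop ($c_i=c_i'$) or a pair of parallel edges. The $\{-1,0,+1\}$-valued signed indicator $\vec v^C$ of $C$, with signs chosen from a cyclic orientation, lies in $\ker(B^T)$, has $\|\vec v^C\|^2=|C|\le n$, and pairs with $\vec u$ as
\[
\vec u\cdot\vec v^C\;=\;\sum_{i\in C}\epsilon_i\,2^{i-1}.
\]
This is a nonempty signed sum of distinct powers of two, and hence a nonzero integer: the leading term satisfies $2^{\max(C)-1}>\sum_{i\in C,\,i<\max(C)}2^{i-1}$, so $|\vec u\cdot\vec v^C|\ge 1$. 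Therefore
\[
\|P_{V^\perp}\vec u\|^2 \;\ge\; \frac{(\vec u\cdot\vec v^C)^2}{\|\vec v^C\|^2} \;\ge\; \frac{1}{|C|} \;\ge\; \frac{1}{n},
\]
which closes the argument.

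The main obstacle is the conceptual leap from the raw inequality to the cycle-space formulation: once one recognises that the required Cauchy--Schwarz slack is precisely the squared distance from $\vec u$ to $\mathrm{im}(B)$, the binary weighting $u_i=2^{i-1}$ has been chosen so carefully that the ``nonzero signed sum of distinct powers of two'' observation delivers the clean constant $1/n$, which in turn produces the $3/(n(4^n-1))$ coefficient appearing in Proposition~\ref{identity}. Without the binary structure one would have no arithmetic lower bound on $\vec u\cdot\vec v^C$ for a generic cycle, and the argument would collapse.
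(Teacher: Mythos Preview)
Your argument is correct and coincides with the paper's own proof in all essential respects: the reduction via Theorem~\ref{naked} to rank-one generators $\vec w\vec w^T$, the rewriting of $B(\vec w\vec w^T)\ge 0$ as $(\vec u\cdot\vec w)^2\le(\|\vec u\|^2-1/n)\|\vec w\|^2$, the production of a nonzero $\{-1,0,1\}$ vector in $\ker(B^T)$ orthogonal to $\vec w$, and the observation that its inner product with $\vec u=(1,2,\ldots,2^{n-1})$ is a nonzero integer. The only cosmetic difference is that the paper isolates the cycle-finding step as a separate Lemma~\ref{technical} (proved by walking the graph until a vertex repeats), whereas you invoke the standard rank formula $\dim\ker B^T=|E|-|V|+c=c\ge 1$ for the incidence matrix; these are the same fact in different dress.
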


\noindent Note that the former inequality is violated by general gaussian distributions (take, for instance, $\gamma_{ij}=2^{i+j}$).

To prove the proposition, we will need the following result. 

\begin{lemma}
\label{technical}
Let $D=\{\vec{d}^k\}_{k=1}^m\subset \R^n$ be a set of vectors of the form $d_i^k=\delta(i,c_k)-\delta(i,c'_k)$, for $\vec{c},\vec{c}'\in\{1,2,...,n\}^m$. If the vectors in $D$ are linearly dependent, then there exists a non-null vector $\vec{x}\in\{-1,0,1\}^m$ such that $\sum_{k=1}^mx_k\vec{d}^k=0$.
\end{lemma}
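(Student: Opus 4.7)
The plan is to reinterpret the vectors $\vec{d}^k$ as (signed) incidence vectors of edges in a multigraph and then reduce the statement to a standard fact from algebraic graph theory: the incidence vectors of the edges of a graph are linearly dependent iff the graph contains a cycle, and any cycle produces a $\{-1,0,1\}$-valued dependence relation.

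First, build an auxiliary multigraph $G$ with vertex set $V=\{1,\dots,n\}$ and, for each $k\in\{1,\dots,m\}$ with $c_k\neq c'_k$, an oriented edge $e_k$ from $c'_k$ to $c_k$. With this identification, $\vec{d}^k$ is precisely the signed incidence vector of $e_k$ (a $+1$ at the head, a $-1$ at the tail). The degenerate case $c_k=c'_k$ makes $\vec{d}^k=\vec 0$, and we may dispose of it immediately: taking $x_k=1$ and $x_j=0$ for $j\neq k$ gives a non-null $\{-1,0,1\}$ vector certifying the (trivial) dependence.

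For the remaining case, assume all $\vec{d}^k$ are non-zero. I would then invoke, and briefly justify, the classical fact that a collection of signed edge-incidence vectors in a graph is linearly independent iff the underlying multigraph is acyclic (a forest). The ``forest $\Rightarrow$ independent'' direction follows by induction on $|E|$: a forest has a leaf vertex $v$ of degree $1$, touched by exactly one edge $e$; the only vector in the family with a non-zero component at $v$ is the incidence vector of $e$, so any dependence must give it coefficient $0$, and we remove $e$ and continue. This shows that linear dependence of $\{\vec{d}^k\}$ forces $G$ to contain a cycle $v_0\,e_{k_1}\,v_1\,e_{k_2}\,v_2\,\cdots\,v_{r-1}\,e_{k_r}\,v_0$.

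Finally, from such a cycle I construct the desired $\vec{x}$ explicitly by walking around it: set $x_{k_i}=+1$ if the oriented edge $e_{k_i}$ points from $v_{i-1}$ to $v_i$ and $x_{k_i}=-1$ otherwise, and $x_j=0$ for all indices $j$ not appearing in the cycle. At every vertex $v_i$ of the cycle exactly two terms contribute, and the sign convention makes them cancel; at every vertex outside the cycle no term contributes. Hence $\sum_k x_k\vec{d}^k=\vec 0$ with $\vec{x}\in\{-1,0,1\}^m$ non-null, as required. The main delicate point is making the forest-induction argument clean enough to justify the equivalence between linear independence and acyclicity, since everything else follows mechanically from it.
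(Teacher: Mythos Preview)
Your proof is correct and is at heart the same argument as the paper's: both recognize that each $\vec{d}^k$ is the signed incidence vector of an edge $c'_k\!\to c_k$ in a (multi)graph on $\{1,\dots,n\}$, and both extract a $\{-1,0,1\}$-valued dependence from a cycle in that graph. The difference is purely in presentation. The paper never names the graph; instead it walks a path $i_1,i_2,i_3,\dots$ by repeatedly using the given real dependence $\sum_k\mu_k\vec d^k=0$ to guarantee that the current frontier vertex is touched by a fresh edge in $E=\{\vec d^k:\mu_k\neq0\}$, and stops when an index repeats. You take the contrapositive route---leaf-stripping to show ``forest $\Rightarrow$ independent'', hence ``dependent $\Rightarrow$ cycle''---and then read off the $\pm1$ coefficients from a fixed cyclic orientation. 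Your phrasing is a bit cleaner and more recognizable to anyone who has seen incidence matrices, while the paper's path-walk is more self-contained and avoids appealing to an external fact. Either way, the zero-vector case and the multigraph subtlety (parallel edges giving a $2$-cycle) are handled, so nothing is missing.
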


\begin{proof}
First, some notation: given a non-null vector $\vec{d}\in \R^n$ of the form $d_i=\delta(i,c)-\delta(i,c')$, we will call $c$ and $c'$ the \emph{occupied indices of $\vec{d}$}. Also, we will denote by $\{\ket{i}:i=1,...,n\}$ the canonical basis of $\R^n$, i.e., $\ket{1}=(1,0,0,...)$, $\ket{2}=(0,1,0,...)$, etc.

By hypothesis, there exist coefficients $\{\mu_k\}$ such that $\sum_{k}\mu_k\vec{d}^{k}=0$. Consider the set of vectors $E=\{\vec{d}^k:\mu_k\not=0\}$. If there exists $\vec{e}\in E$ such that $\vec{e}=0$, then we have that $1\cdot\vec{e}=0$, and we have finished. Suppose, on the contrary, that none of the elements of $E$ is null. Define $x_1=1$ and choose an arbitrary element of $E$ (denote it $\vec{e}^1$), with $\vec{e}^1=\ket{i_1}-\ket{i_2}$. Since $\vec{e}^1\not=0$, there must exist another non-null vector $\vec{e}^2$ sharing an occupied index with $\vec{e}^1$ (for otherwise $\sum_{k}\mu_k\vec{d}^{k}=0$ would not hold). Call  $(i_2,i_3)$ the occupied indices of $\vec{e}^2$. Take $x_2=e^2_{i_2}$. If $i_1=i_3$, then $\sum_{i=1,2}x_i\vec{e}^i=\ket{i_1}-\ket{i_2}+\ket{i_2}-\ket{i_1}=0$, and we have finished. If such is not the case, then there is a third vector $\vec{e}^3$ whose occupied indices are $(i_3,i_4)$. Again set $x_3=e^3_{i_3}$. If $i_4=i_1$ or $i_4=i_2$, then either $\sum_{i=1}^3x_i\vec{e}^i=0$ or $\sum_{i=2}^3x_i\vec{e}^i=0$, respectively. If $i_4$ does not equal any of the previous indices, then there exists a different vector $\vec{e}^4\in E$ with occupied indices $(i_4,i_5)$, etc. Since the set of rows is finite, if we iterate the procedure at some point we will find a vector whose occupied indices $(i_k,i_{k+1})$ are such that $i_{k+1}=i_{k'}$, for some $k'<k$, in which case we know that $\sum_{j=k'}^kx_j\vec{e}^j=\sum_{j=k'}^k\ket{i_j}-\ket{i_{j+1}}=0$. 
\end{proof}

\noindent \emph{Proof of Proposition \ref{identity}.} In order to prove relation (\ref{restrict}), it is enough to show that it holds for covariance matrices of the form $\gamma=\vec{w}\vec{w}^T$, with the vector $\vec{w}\in \R^n$ given by

\be
w_i\equiv(\lambda_{c_i}-\lambda_{c'_i}),
\ee

\noindent for some $\vec{c},\vec{c}'\in\{1,...,n\}^n$. Our next step will be to prove that, for any choice of $\vec{c},\vec{c}'$, there exist a pair of disjoint sets $E,F\in\{1,2,...,n\}$ (one of which may be actually empty) such that the non-null vector $s_i=\chi_{E}(i)-\chi_{F}(i)$ satisfies $\vec{s}\cdot\vec{w}=0$.

Indeed, note that $\vec{w}=C\vec{\lambda}$, where $C_{ik}=\delta(c_i,k)-\delta(c'_i,k)$.  Now, the rows of $C$ must be linearly dependent, since there are $n$ of them and all are perpendicular to the vector $(1,1,...,1)$. Applying Lemma \ref{technical} to $C$'s rows we thus have that there exist a non-null vector $\vec{s}\in \{-1,0,1\}^n$ such that $\vec{s}^TC=0$.

Now, $B(\vec{w}\vec{w}^T)$ can be expressed as

\be
B(\vec{w}\vec{w}^T)=\left(1-\frac{1}{n\|\vec{v}\|^2}\right)\|\vec{w}\|^2-\frac{1}{\|\vec{v}\|^2}(\vec{w}\cdot\vec{v})^2,
\label{interm}
\ee

\noindent where $\vec{v}$ corresponds to the vector $v_i=2^{i-1}$. Let $\vec{s}\in\{-1,0,1\}^n$ be the non-null vector such that $\vec{s}\cdot\vec{w}=0$, and call $E$ and $F$ the sets of indices where its entries are $1$ or $-1$, respectively. Then we have that $|\vec{v}\cdot\vec{s}|=|\sum_{i\in E}2^i-\sum_{i\in F}2^i|\geq 1$, since $\sum_{i\in E}2^i$ and $\sum_{i\in F}2^i$ correspond to the binary expansion of two different natural numbers. Call $\hat{s}$ the normalization of the vector $\vec{s}$. Then, $\vec{v}=\mu\hat{s}+\sqrt{\|\vec{v}\|^2-\mu^2}\hat{s}^\perp$, where $\hat{s}^\perp$ is a unit vector orthogonal to $\hat{s}$ and $|\mu|=|\vec{v}\cdot\hat{s}|\geq \frac{1}{\sqrt{|E|+|F|}}\geq\frac{1}{\sqrt{n}}$.

Finally, we have that 

\be
|\vec{w}\cdot\vec{v}|=\sqrt{\|\vec{v}\|^2-\mu^2}|\hat{s}^\perp\cdot\vec{w}|\leq \sqrt{\|\vec{v}\|^2-\frac{1}{n}}|\hat{s}^\perp\cdot\vec{w}|\leq \sqrt{\|\vec{v}\|^2-\frac{1}{n}}\|\vec{w}\|.
\ee

\noindent Substituting in (\ref{interm}), we arrive at (\ref{restrict}).\begin{flushright}$\square$\end{flushright}

\subsection{How different are general and finitely generated covariance matrices?}

Call ${\cal G}^d_n$ the cone of all covariance matrices generated by $n$-tuples of identical $d$-valued microscopic variables. Given that both ${\cal G}^d_n$ and the set ${\cal G}^{\infty}_n$ of $n$-variate general covariance matrices are cones, estimating their difference does not make sense unless a scale is fixed. A natural way to do so is via normalization, i.e., by dilating each covariance matrix $\gamma$ until it satisfies $\tr(\gamma)=1$. Intuitively, this normalization constraint fixes the total amount of noise in the system to be equal to unity. 

The next step is to define a suitable distance to quantify the difference of the sets $\G^d_n,\G^\infty_n$. One possibility is to use a witness $G\in M_{n\times n}$. That way, for fixed $G$, a measure of the difference between $\G^d_n,\G^\infty_n$ would be given by

\be
K^d(G)\equiv \frac{\max\{|\tr(G\gamma')|:\gamma'\in \G^d_n,\tr(\gamma')=1\}}{\max\{|\tr(G\gamma)|:\gamma\in \G^\infty_n, \tr(\gamma)=1\}}.
\ee

\noindent Clearly, $K^d(G)\leq 1$, and the lower its value, the greater the distance between the two sets.

One way to interpret eq. (\ref{restrict}) is that there exists a witness $G\in M_{n\times n}$ such that

\be
K^n(G)=1-\frac{3}{n(4^n-1)}.
\label{parecido}
\ee

\noindent Indeed, take $G_{ij}=2^{i+j-2}\frac{3}{4^n-1}$. $G$ itself is a rank-1 normalized positive semidefinite matrix, and so a normalized element of $\G^\infty_n$. It follows that, for all normalized $\gamma$, $\tr(G\gamma)\leq 1$, with equality for $\gamma=G$. On the other hand, according to eq. (\ref{restrict}), $\tr(G\gamma')\leq 1-\frac{3}{n(4^n-1)}$, for all normalized $\gamma'\in \G^n_n$ (and it is not difficult to see that this inequality can be saturated). 

Relation (\ref{parecido}) suggests that, although different, ${\cal G}^d_n$ and ${\cal G}^{\infty}_n$ are exponentially close, and thus they would be very difficult to distinguish in practice. However, since general $n$-variate gaussians can be generated by $n+1$-dimensional systems, one could argue that the proximity between ${\cal G}^d_n$, ${\cal G}^{\infty}_n$ is just a restatement of the fact that $n$ can be approximated by $n+1$ in the limit of large $n$. And actually, if free outcome structure models are to be of any practical use, the relevant question is instead how close ${\cal G}^d_n$ and ${\cal G}^{\infty}_n$ are in the limit $n\gg d\gg 1$.

To shed light on this matter, we will follow the lines of \cite{jop} and consider an idealized scenario where the macroscopic variables form a continuum, i.e., our gaussian variables are $\{X_t: t\in [0,1]\}$. In these conditions, the covariance matrix of our system shall be replaced by a positive semidefinite kernel of the form $\gamma(t,u)$, such that 

\be
\int_{{\cal R}} (\langle X_t X_u\rangle-\langle X_t\rangle \langle X_u\rangle) dtdu=\int_{{\cal R}}\gamma(t,u)dtdu,
\ee

\noindent where ${\cal R}$ is an arbitrary region of $[0,1]^2$. We will further assume that our macroscopic variables are normalized so that

\be
\int_{0}^1\gamma(t,t)dt=1,
\ee

\noindent this condition being the continuum counterpart of demanding that the trace of the covariance matrix is equal to 1.

In this setting, any normalized positive-semidefinite symmetric kernel qualifies as the covariance kernel of general macroscopic variables. On the other hand, the kernels generated by microscopic variables with spectrum of cardinality $d$ belong to the set

\be
\G^d\equiv\overline{\mbox{cone}}\left\{\gamma(t,u):\gamma(t,u)=\sum_{k,j=1}^{d}\lambda_k\lambda_j(\chi^A_k(t)-\chi^B_k(t))(\chi^A_j(u)-\chi^B_j(u))\right\},
\label{contin_d}
\ee 

\noindent where $\chi^A_k$ ($\chi^B_k$) denotes the characteristic function of the set $A_k\subset [0,1]$ ($B_k\subset [0,1]$). The sets $\{A_k\}_{k=1}^{d}$ ($\{B_k\}_{k=1}^{d}$) are mutually disjoint, and satisfy $\cup_{k=1}^{d} A_k=[0,1]$ ($\cup_{k=1}^{d} B_k=[0,1]$).

In this continuum scenario, witnesses $G\in M_{n\times n}$ shall be replaced by bounded symmetric kernels $G(u,t)$, and the action of a witness $G$ over an element $\gamma$ of $\G^\infty$ will be given by

\be
G(\gamma)\equiv\int_{[0,1]^2}G(t,u)\gamma(u,t)dudt.
\ee

The next result lower bounds the distance between the sets ${\cal G}^d$ and ${\cal G}^{\infty}$ via a simple witness.

\begin{prop}
\label{poly_sep}
Consider the symmetric kernel $G(t,u)\equiv 3tu$. Then, 
\be
K^d(G)\equiv\frac{\max\{|G(\gamma')|:\gamma'\in \G^d,\int_{0}^1\gamma'(t,t)dt=1\}}{\max\{|G(\gamma)|:\gamma\in \G,\int_{0}^1\gamma(t,t)dt=1\}}\leq 1-\frac{1}{[d(d-1)+2]^2}.
\label{bound_asymp}
\ee

\end{prop}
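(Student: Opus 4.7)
The plan is to reduce the assertion to a one-dimensional variational inequality over the extreme generators of $\mathcal{G}^d$, and then establish this inequality through monotone rearrangement and a step-function approximation bound.

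The denominator in the definition of $K^d(G)$ is straightforward. For any $\gamma\in\mathcal{G}^\infty$ with $\int_0^1\gamma(t,t)\,dt=1$, a spectral decomposition $\gamma(t,u)=\sum_\alpha\xi_\alpha(t)\xi_\alpha(u)$ gives $\sum_\alpha\|\xi_\alpha\|_{L^2}^2=1$; the Cauchy--Schwarz bound $(\int_0^1 t\xi_\alpha)^2\le \tfrac{1}{3}\|\xi_\alpha\|_{L^2}^2$ yields $|G(\gamma)|=3\sum_\alpha(\int t\xi_\alpha)^2\le 1$, with equality at $\gamma(t,u)=3tu$. So the denominator equals $1$ and it remains to bound the numerator. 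Since $\mathcal{G}^d$ is a cone whose extreme rays are the rank-one kernels displayed in (\ref{contin_d}) and $G$ is linear, the maximum over the normalized slice of $\mathcal{G}^d$ is attained at a single generator $\gamma(t,u)=\eta(t)\eta(u)$, with $\eta(t)=\sum_k\lambda_k(\chi^A_k(t)-\chi^B_k(t))=f(t)-g(t)$ and $f,g$ taking values in $\{\lambda_1,\dots,\lambda_d\}$. Hence
\[
K^d(G)=\sup\,\frac{3\bigl(\int_0^1 t\,\eta(t)\,dt\bigr)^2}{\int_0^1\eta(t)^2\,dt},
\]
the supremum ranging over all admissible $\eta$.

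I next invoke the Hardy--Littlewood rearrangement inequality to replace $\eta$ by its monotone (non-decreasing) rearrangement, which preserves $\int\eta^2$ and cannot decrease $|\int t\eta|$. So WLOG $\eta$ is non-decreasing. Because $\eta(t)=\lambda_{k_A(t)}-\lambda_{k_B(t)}$ takes values in the difference set $\{\lambda_i-\lambda_j:1\le i,j\le d\}$, whose cardinality is at most $N:=d(d-1)+1$ (the $d$ diagonal pairs collapse to the common value $0$), the rearranged $\eta$ is a non-decreasing step function with at most $N$ intervals of constancy $I_k=[t_{k-1},t_k]$, of lengths $\ell_k$, midpoints $m_k$, and constant values $v_k$.

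The main analytic step is then to prove
\[
3\Bigl(\int_0^1 t\eta\,dt\Bigr)^2\le\Bigl(1-\frac{1}{(N+1)^2}\Bigr)\int_0^1\eta^2\,dt .
\]
Writing $\eta=ct+\eta_\perp$ with $\int t\eta_\perp=0$ (so $c=3\int t\eta$), one has $\int\eta^2-3(\int t\eta)^2=\|\eta-ct\|_{L^2}^2$, and piecewise integration gives the identity
\[
\|\eta-ct\|_{L^2}^2=\sum_{k=1}^N\ell_k(v_k-c m_k)^2+\frac{c^2}{12}\sum_{k=1}^N\ell_k^3 .
\]
The second sum is already bounded below by $c^2/(12 N^2)$ via the power-mean inequality $\sum\ell_k^3\ge 1/N^2$ (equality at uniform lengths); by itself this produces only the weaker estimate $1-1/(4N^2+1)$. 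Upgrading to $1-1/(N+1)^2$ requires extracting additional slack from the first sum, exploiting that the $v_k$ are confined to the discrete difference set $\{\lambda_i-\lambda_j\}$ and therefore cannot in general be proportional to the continuous midpoints $m_k$ (the relation $v_k = c m_k$ would be needed to annihilate the first sum). Obtaining a uniform quantitative lower bound on $\sum_k\ell_k(v_k-cm_k)^2$ over all admissible data is the step I expect to be the hardest; combined with the $\ell_k^3$ estimate and $c=3\int t\eta$ it yields the advertised bound $K^d(G)\le 1-1/[d(d-1)+2]^2$.
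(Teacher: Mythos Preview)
Your reduction is sound through the weak bound: the denominator equals $1$, the passage to rank-one generators is correct, the monotone rearrangement is legitimate (it preserves the value multiset of $\eta$, which therefore still lies in the difference set), the identity $\|\eta-ct\|_{L^2}^2=\sum_k\ell_k(v_k-cm_k)^2+\tfrac{c^2}{12}\sum_k\ell_k^3$ holds, and $1-1/(4N^2+1)$ follows.

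The gap is in your upgrade to $1-1/(N+1)^2$. You invoke ``discreteness'' of the difference set to force the first sum to be bounded away from zero, but there is no discreteness here: the $\lambda_i$ are free real parameters over which one is taking a supremum, so the values $v_k$ are not quantized in any useful sense. If the only constraint you retain is ``at most $N$ levels'' (which is all you have written down), the supremum of $3(\int t\eta)^2/\|\eta\|^2$ over such step functions is exactly $1-1/(4N^2)$, attained with equal interval lengths and level values proportional to the midpoints. Since $1-1/(4N^2)>1-1/(N+1)^2$ for every $N\ge2$, your relaxation is too coarse by a factor of roughly four in the deficit, and you give no mechanism that would recover it. The sentence ``the $v_k$ cannot in general be proportional to the continuous midpoints $m_k$'' is not an argument: you would need a \emph{uniform quantitative} obstruction valid for all choices of $\lambda_1,\dots,\lambda_d$, all partitions, and all scalings, and you have not supplied one.

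The structural property of the difference set that actually does the work is not discreteness but \emph{sign symmetry}: $\{\lambda_i-\lambda_j\}=-\{\lambda_i-\lambda_j\}$, so the set has at most $\bar d:=d(d-1)/2+1$ non-negative elements. The paper exploits this by writing $\eta=\sum_{k=1}^{\bar d}\mu_k(\chi^{A_k}-\chi^{B_k})$ with $\mu_k\ge0$ and all the sets $A_k,B_j$ pairwise disjoint; since $t\ge0$ on $[0,1]$, flipping the sign of $\eta$ on each $B_k$ leaves $\|\eta\|^2$ unchanged and can only increase $|\int t\eta|$, so one may take every $B_k=\emptyset$. This reduces the problem to the best $L^2$ approximation of $\sqrt{3}\,t$ by a step function with $\bar d$ \emph{free} levels on an optimized partition into intervals, for which a direct computation gives $1-1/(4\bar d^2)$. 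As $2\bar d=N+1=d(d-1)+2$, this is exactly the stated bound. In other words, the missing idea is to halve the effective number of levels via the $\pm$ pairing before running your step-function estimate, rather than trying to squeeze extra slack out of the first sum afterwards.
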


This proposition has to be understood as a proof that, in the limit $n\gg d\gg 1$, the difference between $\G^d_n$ and $\G^{\infty}_n$ decreases at least as the inverse of a quartic polynomial in $d$, i.e., it is not exponentially small. Testing free outcome structure microscopic models is therefore potentially practical.

\begin{proof}
In quantum mechanics terminology, $G$ can be seen as a normalized pure state $\tilde{G}=\proj{\psi}$, with $\ket{\psi}=\sqrt{3} \int_0^1 dt t\ket{t}$, with $\braket{t}{u}=\delta(t-u)$. Likewise, any normalized positive semidefinite kernel $\gamma(t,u)$ can be regarded as a normalized a quantum state $\tilde{\gamma}$, and so

\be
G(\gamma)=\tr(\tilde{G}\tilde{\gamma}).
\ee

\noindent If we are optimizing the above quantity over general quantum states $\tilde{\gamma}$, it is clear that the maximum will be attained by taking $\tilde{\gamma}=\tilde{G}$, in which case $G(\gamma)=1$.

Let us now consider the optimization over ${\cal G}^d$. From equation (\ref{contin_d}), it is clear that normalized elements of ${\cal G}^d$ are convex combinations of normalized pure states $\proj{\phi}$ of the form 

\be
\ket{\phi}=\int_0^1dt\sum_{k=1}^{d}\lambda_k(\chi^A_k(t)-\chi^B_k(t))\ket{t}.
\ee

\noindent In turn, such states belong to the bigger set

\be
\Phi\equiv\{\phi:\ket{\phi}=\int_0^1dt\sum_{k=1}^{d(d-1)/2+1}\lambda_k(\chi^A_k(t)-\chi^{B}_k(t))\ket{t},\braket{\phi}{\phi}=1\},
\ee

\noindent where $\lambda_k\geq 0$, and, this time, \emph{all} the sets $\{A_k,B_j\}$ are mutually disjoint, with $\cup_k(A_k\cup B_k)=[0,1]$.

An upper bound on the maximal value of $G(\gamma')$ over the set ${\cal G}^d$ can hence be obtained by solving the optimization problem:

\be
\max_{\phi\in\Phi}\tr(\tilde{\gamma}\proj{\phi})=\max_{\phi\in\Phi}3\left[\int_0^1dt\phi(t)t\right]^2.
\ee

\noindent Now, since $t\geq 0$ in the interval $[0,1]$, we can take all the sets $\{B_k\}$ to be empty. The problem we aim at solving is thus equivalent to finding the best approximation of the function $t$ by a linear combination of $\bar{d}\equiv d(d-1)/2+1$ characteristic functions of mutually disjoint sets. Since the function $t$ is strictly increasing, it is easy to see that the optimal sets $A_k$ must be of the form $A_k=[a_k,a_{k+1}]$, where $a_1=0$ and $a_{\bar{d}+1}=1$.

Given the intervals $[a_k,a_{k+1}]$, the problem of optimizing $\{\lambda_k\}$ can be solved by Fourier analysis. Indeed, it is clear that, for any vector $w$ and any linear subspace $V$, $\max_{v\in V, \|v\|\leq 1} \braket{w}{v}=\sqrt{\bra{w}P_V\ket{w}}$, where $P_V$ is the projector onto the subspace $V$. Since, by definition of the sets $\{A_k\}$, the functions $\{\chi^A_k(t)\}_{k=1}^{\bar{d}}$ are orthogonal, we have that

\be
\bra{\psi}P_{\vec{A}}\ket{\psi}=\sum_{k=1}^{\bar{d}}c_k^2(a_{k+1}-a_k),
\ee

\noindent where 

\be
c_k=\frac{\sqrt{3}\int^1_0 t\chi^A_k(t)dt}{\int^1_0 (\chi^A_k(t))^2dt}=\sqrt{3}\frac{a^2_{k+1}-a_k^2}{2(a_{k+1}-a_k)}.
\ee

\noindent The problem thus reduces to optimize the intervals $[a_k,a_{k+1}]$ so as to maximize $3f(\vec{a})$, where

\be
f(\vec{a})\equiv\sum_{k=1}^{\bar{d}}\frac{(a_{k+1}+a_k)^2(a_{k+1}-a_k)}{4}.
\ee

\noindent Finally, note that

\begin{eqnarray}
&\frac{1}{3}-f(\vec{a})=\frac{1}{3}\sum_{k=1}^{\bar{d}}(a_{k+1}^3-a_k^3)-f(\vec{a})=\nonumber\\
&=\sum_{k=1}^{\bar{d}}\left[\frac{a_{k+1}^3-a_k^3}{3(a_{k+1}-a_k)}-\frac{(a_{k+1}+a_k)^2}{4}\right](a_{k+1}-a_k)=\sum_{k=1}^{\bar{d}}\frac{(a_{k+1}-a_k)^3}{12}.
\end{eqnarray}

\noindent Renaming $v_k=a_{k+1}-a_k$, our aim is now to minimize $\|\vec{v}\|_3$ over all vectors $\vec{v}\in \R^{\bar{d}}$ such that $\|\vec{v}\|_1=1$. This is a well-known problem in Banach space theory: the optimal vector must satisfy $|v_k|=\frac{1}{\bar{d}}$. Hence we end up with:

\be
\max\{\tr(\gamma\gamma'):\gamma'\in \G^d_{\infty}, \int_{0}^1dt\gamma'(t,t)=1\}\leq\min_{\vec{A}}\bra{\psi}P_{\vec{A}}\ket{\psi}=1-\frac{1}{4\bar{d}^2}.
\ee

\noindent Substituting $\bar{d}=d(d-1)/2+1$, we arrive at (\ref{bound_asymp}).

\end{proof}

Note that we can reinterpret eq. (\ref{bound_asymp}) as a lower bound on the trace distance between a certain normalized element of $\G^\infty$ and the set of normalized elements of $\G^d$. Indeed, as we saw during the proof of proposition \ref{poly_sep}, $G\in \G^\infty$ and $\int_0^1G(t,t)t=1$. On the other hand, for any $\gamma\in \G^d$ satisfying the normalization condition, we have that

\be
\|G-\gamma\|_1\geq 2[1-G(\gamma)]\geq \frac{2}{[d(d-1)+2]^2},
\ee

\noindent where the first inequality derives from the relation

\be
\|G-\gamma\|_1=\max_{-\id\leq S\leq \id}\tr\{(G-\gamma)S\}\geq \tr\{(G-\gamma)(2G-\id)\},
\ee
 
\noindent and the second inequality follows from eq. (\ref{bound_asymp}).

\subsection{General algorithm}

Now that we have proven that the general problem makes sense, the next question to ask ourselves is how to solve it in general. That is, given an $n$-variate gaussian probability distribution with covariance matrix $\gamma$, how can we determine if it can be generated through independent $n$-tuples of microscopic variables $(x^l_1,x^l_2,...,x^l_n)\in\{\lambda^l_1,...,\lambda^l_d\}^n$? 

A blind application of Theorem \ref{naked} leads one to the following semidefinite program \cite{sdp}:

\begin{eqnarray}
&&\hspace{30pt}\min \mbox{ }0\nonumber\\
\mbox{s.t.}&&\nonumber\\
&&\gamma_{ij}=\sum_{\vec{c},\vec{c}'}\sum_{a,b}\Gamma(\vec{c},\vec{c}')_{ij}^{ab}\cdot Z_{ab}(\vec{c},\vec{c}'),\nonumber\\
&&Z(\vec{c},\vec{c}')\geq 0.
\label{cosa}
\end{eqnarray}

Indeed, one can see that the covariance matrix $\gamma$ generated by an arbitrary microscopic distribution $p(a_1,a_2,...)$ is equal to

\begin{eqnarray}
&\gamma_{ij}=\sum_{a,b}[p_{ij}(a,b)-p_i(a)p_j(b)]\lambda_a\lambda_b=\sum_{a,b}\sum_{\vec{c},\vec{c}'}\mu_{\vec{c},\vec{c}'}\Gamma_{ij}^{ab}(\vec{c},\vec{c}')\lambda_a\lambda_b=\nonumber\\
&=\sum_{a,b}\sum_{\vec{c},\vec{c}'}\mu_{\vec{c},\vec{c}'}\Gamma_{ij}^{ab}(\vec{c},\vec{c}')\lambda_a\lambda_b=\nonumber\\
&=\sum_{\vec{c},\vec{c}'}\sum_{a,b}\Gamma_{ij}^{ab}(\vec{c},\vec{c}')Z_{ab}(\vec{c},\vec{c}'),
\end{eqnarray}

\noindent with $Z_{ab}(\vec{c},\vec{c}')\equiv \mu_{\vec{c},\vec{c}'}\lambda_a\lambda_b$. By convexity, we thus have that program (\ref{cosa}) defines a set of matrices that contains ${\cal G}^d_n$.

Conversely, let $\gamma=\sum_{a,b}\Gamma_{ij}^{ab}\cdot Z_{ab}$, where $\Gamma$ is a naked covariance matrix and $Z\geq 0$. Then one can Gram-decompose $Z$ as $Z_{ab}=\sum_{i=1}^d\lambda^i_a\lambda^i_b$, and so $\gamma$ is generated by the $d$ $n$-tuples $(x_1^i,...,x_n^i)\in\{\lambda^i_1,...,\lambda^i_d\}^n$, $i=1,...,d$.

\noindent It follows that program (\ref{cosa}) completely characterizes the set ${\cal G}^d_n$.

Program (\ref{cosa}) is, though, unnecessarily inefficient. Notice, for instance, that the covariance matrices do not change if we fix $\lambda_1=0$. Consequently, the matrices $Z(\vec{c},\vec{c})$ can be taken of size $(d-1)\times (d-1)$. Notice as well that certain pairs of points $(\vec{c},\vec{c}')$ and $(\vec{d},\vec{d}')$ actually codify the same information: for example, the matrices $\vec{w}(\vec{c},\vec{c}')\vec{w}(\vec{c},\vec{c}')^T$ and $\vec{w}(\vec{c}',\vec{c})\vec{w}(\vec{c}',\vec{c})^T$ are identical. Also, in the case $d=n=3$ the vectors

\be
w_1=\lambda_2-\lambda_1,w_2=\lambda_1-\lambda_3,w_3=\lambda_2-\lambda_3,
\ee

\noindent and

\be
w_1=\lambda_2-\lambda_1,w_2=\lambda_3-\lambda_2,w_3=\lambda_3-\lambda_1,
\ee

\noindent although seemingly very different, actually contain the same information (namely, that $w_1$ and $w_2$ are free and $w_3=w_1+w_2$). Likewise, the matrices generated by a vector of the form

\be
w_1=\lambda_2-\lambda_1,w_2=0,w_3=\lambda_2-\lambda_1
\ee

\noindent are among the ones generated by any of the other two vectors.

How to avoid this redundancy? The solution is to divide the set of possible points $\{1,...,d\}^n\times\{1,...,d\}^n$ into different classes and choose just one representative (if any) of each class for the semidefinite program. The classification we propose is based on the following fact: given $\vec{c},\vec{c}'$, one can always find a unique parametric representation for $\vec{w}$ of the form $\vec{w}=P\vec{t}$, with $P$ being an $n\times (d-1)$ matrix with entries in $\{-1,0,1\}$ and satisfying the properties:

\begin{enumerate}
\item The first non-zero row of $P$ is $(1,0,...,0)$.
\item Let $N(j)$ denote the greatest from left to the right non-zero index of rows $1,2,...,j$. Then, $N(j+1)\leq N(j)+1$.
\item If $N(j+1)= N(j)+1$, then $P_{j+1,k}=\delta_{k,N(j+1)}$.
\end{enumerate}

\noindent The intuition behind this canonical form is that those rows $j$ where $N(j)=N(j-1)+1$ are independent parameters, while any other row $k$ is a linear combination of the first $N(k)$ `independent rows'. From the proof of Lemma \ref{technical} it is clear that the entries of $P$ have to be $0$'s and $\pm 1$'s. 

Notice that if for some matrix $P$, $N(n)<d-1$, then the covariance matrices generated by $P$ are a subset of those generated by a modified matrix $P'(P)$ where we have made $d-1-N(n)$ dependent rows independent. In sum, we have only to consider those matrices $P$ satisfying conditions 1-3 and generated by a couple of points $\{\vec{c},\vec{c}'\}$ such that $N(n)=d-1$. We will call ${\cal P}^d_n$ the set of all such matrices. Note that ${\cal P}^{n+1}_n=\{\id\}$.

To clarify these ideas, consider the case $n=d=3$. Then, one can check that the elements of ${\cal P}^3_3$ are:

\be
P_1=\left(\begin{array}{cc}0&0\\1&0\\0&1\end{array}\right),P^i_2=\left(\begin{array}{cc}1&0\\i&0\\0&1\end{array}\right),P_3^{i,j}=\left(\begin{array}{cc}1&0\\ 0&1\\i&j\end{array}\right),
\ee

\noindent with $i,j\in\{0,\pm 1\}$. And, consequently, a $3\times 3$ covariance matrix $\gamma$ belongs to ${\cal G}^3_3$ iff

\be
\gamma=P_1Z_1(P_1)^T+\sum_{i=0,1,-1}P^i_2Z^j_2(P^i_2)^T+\sum_{i,j=0,1,-1}P_3^{i,j}Z_3^{i,j}(P_3^{i,j})^T,
\label{program_3}
\ee

\noindent for some $2\times 2$ positive semidefinite matrices $Z_1,Z_2^j,Z_3^{i,j}$. 

How does this set of covariance matrices look like? Figure (\ref{zonas}) shows a plot of the regions $R^d\equiv\{(\vec{v}_1^T\gamma\vec{v}_1, \vec{v}_2^T\gamma\vec{v}_2):\tr(\gamma)=1,\gamma\in \G^d_3\}$, for $d=2,3,4$, with $\vec{v}^T_1=(1,2,4)$, $\vec{v}^T_2=(4,2,1)$. We used the MATLAB package \emph{YALMIP} \cite{yalmip} in combination with \emph{SeDuMi} \cite{sedumi} to perform the numerical calculations. The sets $\G^3_3$ and $\G^4_3$, although very different from $\G^2_3$, seem quite similar to each other in this two-dimensional slice.

\begin{figure}
  \centering
  \includegraphics[width=10 cm]{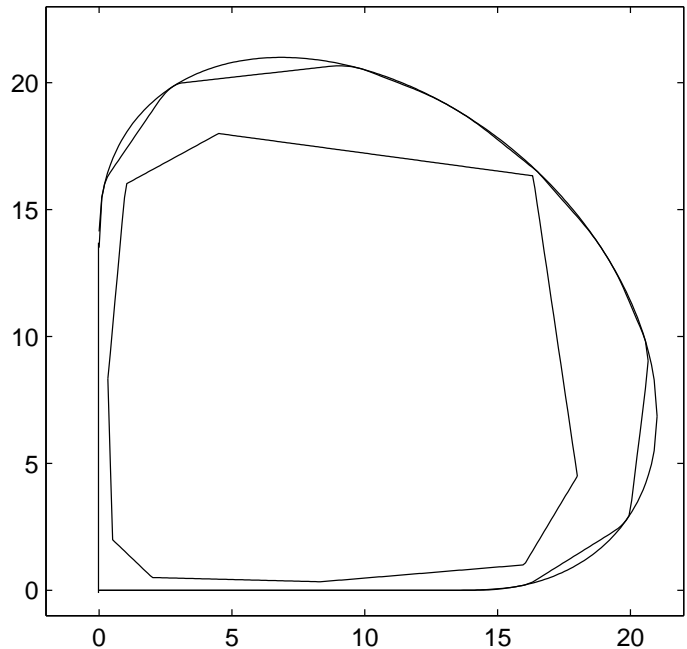}
  \caption{\textbf{Free outcomes.} A slice of the accessible regions of normalized $3\times 3$ covariance matrices generated by tuples of two, three and four-valued microscopic variables.}
  \label{zonas}
\end{figure}

Let us finish with a remark on the applicability of the former results. Notice that program (\ref{cosa}) is only valid if the events that generate the tuple production occur with probablity $\nu=1$ during the course of the experiment. In case $\nu$ is completely unknown, the expression for the covariance matrix $\gamma$ shall be complemented with another summand of the form

\be
\sum_{a,b,\vec{c}}\delta(a,c_i)\delta(b,c_j)Y_{ab}(\vec{c}),
\label{mucho_ruido}
\ee
\noindent with $Y(\vec{c})\geq 0$ for all $\vec{c}\in\{1,...,d\}^n$ (see Section \ref{non_constant}). Moreover, in the regime $\nu\ll 1$, only this summand should appear in $\gamma$'s decomposition. As before, by exploiting redundancies, one can simplify eq. (\ref{mucho_ruido}) to a great extent.

\subsection{The dual problem}

Given that ${\cal G}^d_n$ is a cone, an alternative way to characterize it is through \emph{linear witnesses}, i.e., $n\times n$ matrices $G$ with the property that $\tr(G\gamma)\geq 0$ for all $\gamma\in{\cal G}^d_n$, but such that there exist general covariance matrices $\gamma'\in {\cal G}^{n+1}_n$ with $\tr(G\gamma')< 0$. According to what we have seen, in order to certify that $G$ is indeed a witness for ${\cal G}^d_n$ all we have to do is to check that $G\not\geq 0$ and that $P^TGP\geq 0$ for all $P\in {\cal P}^d_n$. The proof is trivial.

A comparison between the structure of ${\cal G}^d_n$ and those of the sets of classical and quantum correlations is in order. Given a linear functional $\{B^{xy}_{ab}\}$, in order to verify that $\sum_{a,b,x,y}B^{xy}_{ab}P(a,b|x,y)\geq 0$ holds for classical distributions $P(a,b|x,y)$, one has to calculate a finite number of real numbers (the evaluation of $B$ in the extreme points of the classical polytope) and check that all of them are greater or equal than 0. On the other hand, to prove that the inequality holds true for quantum distributions, one would have to show that the (infinitely many) Bell operators that result when we substitute $P(a,b|x,y)$ by projectors of the form $E^x_a\otimes F^y_b$ are positive semidefinite.

In contrast, certifying that $G(\gamma)\geq 0$ for all $\gamma\in {\cal G}^d_n$ amounts to evaluate a finite number of matrices and check that each of them is positive semidefinite. At a purely mathematical level, the set ${\cal G}^d_n$ is thus very interesting, since its behavior is intermediate between the sets of classical and quantum correlations.

\section{Problem complexity}
\label{complex}

Despite our simplifications, the computational power required to run the algorithms presented above to solve the $\G^d_n$-membership problem grows very fast with $n$. This makes one wonder whether it is possible to find simpler schemes to characterize finitely-generated gaussian distributions. In this section we will show a negative result in that direction: deciding whether a given matrix $\gamma$ belongs to ${\cal G}^2_n$ is a {\it strong NP-hard problem}. That is,  there exists a polynomial $q(n)$ such that the membership problem for the set $C_n=\{\gamma\in  {\cal G}^2_n | {\rm tr}(\gamma)=1\}$ cannot be decided efficiently with precision better than $\frac{1}{q(n)}$ in euclidean norm unless $P=NP$.

This result is a trivial consequence of two recent contributions. One, due to Liu \cite[Proposition 2.8]{Liu}, reduces the problem to linear optimization over $C_n$ with inverse polynomial precision. Verifying the hypotheses of Liu's result is trivial in our particular case. Then, one can invoke the following result of Bhaskara et al. \cite[Appendix B]{Bhaskara} 
and notice that QPRATIO is nothing but a linear optimization over $C_n$.
\begin{prop}
There exist two absolute constants $\alpha>\beta$ such that, given a matrix $A$ with coefficients bounded by $1$ and zeros in the diagonal, it is NP-hard to distinguish between a solution $\ge \alpha$ and a solution $\le \beta$ in the following problem, called QPRATIO:

\be
\max_{\{1,-1,0\}^n}\frac{\sum_{i\not = j}A_{ij}x_ix_j}{\sum_ix_i^2}.
\label{QPRATIO}
\ee
\end{prop}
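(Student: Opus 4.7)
\emph{Proof proposal.} The statement is a gap-hardness result for the quadratic program QPRATIO, and the plan is to obtain it by a gap-preserving reduction from a standard hard-to-approximate problem. The most natural starting point is the gap-version of MAX-CUT (or MAX-2-LIN), which is known to be NP-hard to approximate within an absolute constant by the PCP theorem (e.g.\ via H\aa stad's $16/17$ bound).

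First I would fix an instance of gap-MAX-CUT on a graph $G=(V,E)$ with $|V|=n$, and build the matrix $A$ as a suitably normalized signed adjacency matrix: $A_{ij}=-1/\Delta$ if $\{i,j\}\in E$ and $0$ otherwise, with $\Delta$ chosen so that $\|A\|_\infty\le 1$ and the diagonal is zero, as required by the proposition. For any assignment $x\in\{-1,+1\}^n$ one has the identity
\begin{equation}
\sum_{i\neq j} A_{ij} x_i x_j \;=\; -\frac{2}{\Delta}\bigl(|E| - 2\,\mathrm{cut}(x)\bigr),
\end{equation}
so that, after dividing by $\sum_i x_i^2 = n$, a cut of size $\ge (1-\epsilon_1)|E|/2$ produces a QPRATIO-value $\ge \alpha$ while every cut of size $\le (1-\epsilon_2)|E|/2$ yields value $\le \beta$, for explicit absolute constants $\alpha>\beta$ inherited from the MAX-CUT gap.

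The main work, and the step I expect to be the real obstacle, is that QPRATIO allows $x_i\in\{-1,0,1\}$, not merely $\pm 1$. Setting $x_i=0$ amounts to removing vertex $i$, so the optimization is closer in spirit to a densest-subgraph problem than to MAX-CUT, and one must show that this extra freedom cannot artificially inflate the ratio beyond a fixed constant factor. Two standard routes are available. One is a randomized rounding argument: given any $\{-1,0,1\}$ solution of ratio $\rho$, independently replace each zero by a uniform $\pm 1$; by linearity of expectation and the bound $|A_{ij}|\le 1$, the expected numerator decreases by at most a constant factor while the denominator becomes $n$, yielding a $\pm 1$ solution of ratio $\Omega(\rho)$. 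The alternative is a truncation argument based on scale invariance of the ratio and convexity in each coordinate, picking for each $i$ the value $x_i\in\{-1,+1\}$ maximizing the conditional quadratic form.

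Finally, I would track the constants through both reductions so that the YES/NO gap of MAX-CUT maps to a strict gap $\alpha>\beta$ for QPRATIO. Combined with the earlier invocation of Liu's reduction \cite[Proposition 2.8]{Liu} and the observation that QPRATIO is a linear optimization over $C_n$, this gives strong NP-hardness of the membership problem for $\mathcal{G}^2_n$ at inverse-polynomial precision. The delicate piece is genuinely the reduction of the $\{-1,0,1\}$ problem to the $\pm 1$ case with only constant loss; everything else is routine bookkeeping.
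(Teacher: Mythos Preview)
The paper does not prove this proposition at all: it is quoted verbatim as a result of Bhaskara, Charikar, Manokaran and Vijayaraghavan \cite[Appendix B]{Bhaskara}, and the only contribution of the paper here is the observation that QPRATIO coincides with linear optimization over $C_n$. So there is no ``paper's own proof'' to compare to; you are attempting to reprove an external result.

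That said, your proposed argument has a genuine gap, precisely at the step you yourself flag as delicate. The randomized rounding from $\{-1,0,1\}$ to $\{\pm 1\}$ does \emph{not} lose only a constant factor. If $x$ has support $S$ with $|S|=k$ and ratio $\rho$, then replacing the zeros by independent uniform signs gives expected numerator exactly $\sum_{i\ne j\in S}A_{ij}x_ix_j=k\rho$ (the cross and new-new terms vanish in expectation), while the denominator becomes $n$. The expected ratio is therefore $(k/n)\rho$, which can be arbitrarily small. A concrete obstruction: take $A$ with $A_{12}=A_{21}=1$ and all other entries $0$. The vector $x=(1,1,0,\dots,0)$ has QPRATIO value $1$, but every $\pm 1$ vector has value at most $2/n$. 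So no rounding from $\{-1,0,1\}$-solutions to $\pm 1$-solutions with only constant loss can exist in general, and your ``truncation/convexity'' alternative runs into the same wall for the same reason: the extra freedom to zero out coordinates is exactly what makes QPRATIO a ratio/densest-subgraph-type problem rather than a MAX-CUT-type problem.

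The upshot is that a direct reduction from gap-MAX-CUT via the signed adjacency matrix will not yield the statement without substantial additional structure on the instances (at minimum one must arrange that every near-optimal QPRATIO solution has support of size $\Omega(n)$, which is not automatic). Bhaskara et al.\ handle this with a tailored reduction; if you want a self-contained proof you should consult their Appendix B rather than try to push the MAX-CUT route.
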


\noindent Indeed, note that eq. (\ref{QPRATIO}) is equal to $\max\{\tr(A\gamma):\gamma\in C_n\}$.

As a final remark, our manuscript shows a natural context  where the maximization problem QPRATIO  appears, which can foster the recently initiated  research on this problem within the computer science community.

\section{Connection with quantum non-locality}
\label{quantum_non_loc}

There exist situations where several macroscopic variables are at stake, but nevertheless the probability density $p(X_1,\ldots,X_n, Y_1,\ldots, Y_n)$ is not experimentally accessible (actually, it may not even exist). This is the case when two parties, call them Alice and Bob, are allowed to interact with their respective ensembles before a measurement is carried out. In such scenarios, for each possible pair of interactions $z,t$, Alice and Bob will be able to estimate the marginals $p(X_z,Y_t|z,t)dX_zdY_t$, see Figure \ref{mac}. Moreover, if Alice and Bob's operations are space-like separated, then $p(X_z|z,t)=p(X_z|z,t')$ ($p(Y_t|z,t)=p(Y_t|z',t)$), for all $t,t'$ ($z,z'$).

Suppose now that all such probability densities happen to be gaussian. A plausible explanation Alice and Bob may come up with is that, whenever they perform an experiment, there is an event in some intermediate region between their labs that produces $N\gg 1$ independent pairs of $d$-leveled particles. By conservation of linear momentum, two particle beams are thus produced, one directed to Alice and another one, to Bob. The action of Alice's (Bob's) interaction $z$ ($t$) affects individually the state of each particle in her (his) beam. After such an interaction, the particles impinge on a simple detector, that produces a state-dependent signal for each incident particle, and Alice and Bob's readings are precisely proportional to such a sum of signals.

\begin{figure}
  \centering
  \includegraphics[width=10 cm]{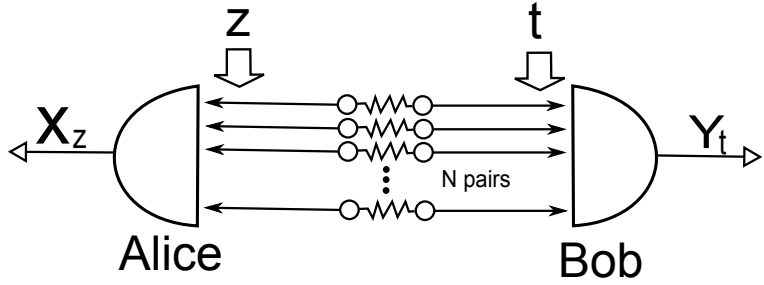}
  \caption{\textbf{Bipartite macroscopic quantum experiment.} Alice and Bob receive each a beam of correlated particles, with which they are allowed to interact in some ways $z$ and $t$ before they reach the detectors. Consequently, they are able to estimate the probability densities $P(X_z,Y_t|z,t)dX_zdY_t$. If their measurements are sufficiently coarse-grained (but precise enough to detect fluctuations in the values of $X_z,Y_t$), this set of gaussian marginal distributions will always admit a local hidden variable model, that is, a classical explanation.}
  \label{mac}
\end{figure}

While the particles in each pair will not be necessarily identical, it is not unreasonable to assume that Alice's (Bob's) interaction does not change the nature of such particles, but only their level probability distribution. That is, even though $\lambda^z_a\not=\lambda^t_a$, it is natural to suppose that $\lambda^z_a=\lambda^{z'}_a$, $\lambda^t_b=\lambda^{t'}_b$. We will introduce the notation $\G^{d_A,d_B}_{n_A,n_B}$ to denote the set of gaussian distributions with $n_A$ ($n_B$) macroscopic variables associated to Alice (Bob), generated by $d_A$($d_B$)-identically-valued classical systems.

Due to Macroscopic Locality (ML) \cite{mac_loc}, for any set of pairs of gaussian distributions 

\be
\{p(X_z,Y_t|z,t)dX_zdY_t\}_{z,t}
\ee

\noindent generated in a quantum bipartite experiment, there is always a global probability distribution $p(X_1,...,X_n,Y_1,...,Y_n)$ that admits $\{p(X_z,Y_t|z,t)dX_zdY_t\}_{z,t}$ as marginals.

One way to interpret this result is that the outcome of any macroscopic quantum experiment can be explained by a classical particle model, where each classical particle reaching Alice's (Bob's) lab produces a current proportional to $X_z$ ($Y_t$), according to the distribution $p(X_1,...,X_n,Y_1,...,Y_n)$. This classical theory, though, seems unnecessarily convoluted: note that, independently of the number of levels $d_q$ in the original quantum experiment, the corresponding classical particles have a continuous spectrum. One thus wonders if there exists a simpler mapping between quantum and classical macroscopic systems that preserves the finite cardinality of the microscopic spectrum. In other words, given a macroscopic quantum experiment with $d_q<\infty$, is it always possible to find a classical description with $d_c<\infty$?


In this section we will answer this question in the negative. More concretely, we will prove that, in order to reproduce quantum macroscopic experiments involving many copies of the maximally entangled two-qubit state ($d_q=2$), one needs to conjure up classical models with particles of infinitely many levels ($d_c=\infty$). So, in a sense, quantum non-locality leaves some signature at the macroscopic level.

Take the following functional, based on the chain inequality \cite{chain}:

\be
G_n(\{\langle \bar{X}_j\bar{Y}_k\rangle\})=\sum_{k=1}^{n}\langle \bar{X}_k\bar{Y}_k\rangle+\sum_{k=1}^{n-1}\langle \bar{X}_{k+1}\bar{Y}_k\rangle-\langle \bar{X}_{1}\bar{Y}_{n}\rangle,
\label{testigo}
\ee

\noindent where $\bar{X}_k\equiv X_k-\langle X_k\rangle, \bar{Y}_k\equiv Y_k-\langle Y_k\rangle$, and consider the problem of minimizing it under the assumption that 

\be
\sum_{k=1}^n\langle\bar{X}^2_k\rangle+\sum_{k=1}^n\langle\bar{Y}^2_k\rangle=1.
\label{norm}
\ee

Then we have the following proposition.

\begin{prop}
\label{quantum}
For any $n$, the minimal value of (\ref{testigo}) compatible with Macroscopic Locality is equal to $g^*_n\equiv -\cos\left(\frac{\pi}{2n}\right)$, and can be attained by performing equatorial measurements over the two-qubit singlet state $\frac{1}{\sqrt{2}}(\ket{0}_A\ket{1}_B-\ket{1}_A\ket{0}_B)$.

\end{prop}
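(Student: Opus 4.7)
The plan is to reduce the constrained minimization to a semidefinite problem and diagonalize an explicit matrix. By Macroscopic Locality, the admissible joint statistics of $(X_1,\ldots,X_n,Y_1,\ldots,Y_n)$ exhaust all gaussian distributions, so the covariance matrix $\gamma$ may be any $2n\times 2n$ positive semidefinite matrix. I would write $G_n=\tr(G\gamma)$, with $G$ the symmetric $2n\times 2n$ matrix having vanishing $X$-$X$ and $Y$-$Y$ diagonal blocks and off-diagonal block $M/2$, where the $n\times n$ matrix $M$ is defined by $M_{kk}=1$ ($k=1,\ldots,n$), $M_{k+1,k}=1$ ($k=1,\ldots,n-1$), $M_{1,n}=-1$ and all other entries zero; then (\ref{norm}) becomes $\tr(\gamma)=1$. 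The minimum of $\tr(G\gamma)$ over this cone is $\lambda_{\min}(G)$, attained by $\gamma$ equal to the projector onto the corresponding eigenvector.

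Since $G$ is block anti-diagonal, its spectrum is $\{\pm\sigma_i(M)/2\}_{i=1}^n$, so $\lambda_{\min}(G)=-\|M\|/2$. The key step is to recognize that $M=I+C_{-}$, where $C_{-}$ is the ``antiperiodic shift'' with $(C_{-})_{k,k-1}=1$ for $k=2,\ldots,n$, $(C_{-})_{1,n}=-1$ and zero elsewhere. A direct check yields $C_{-}^{T}C_{-}=I$, so $C_{-}$ is a signed permutation and hence normal, with eigenvalues the $n$-th roots of $-1$, namely $\omega_k=e^{i\pi(2k-1)/n}$. Thus $M$ has singular values $|1+\omega_k|=2|\cos(\pi(2k-1)/(2n))|$, maximized at $k=1$ and $k=n$ to give $\|M\|=2\cos(\pi/(2n))$, whence $\lambda_{\min}(G)=-\cos(\pi/(2n))$.

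For the quantum realization I would take $N$ independent singlets and let Alice measure the equatorial spin at angle $\alpha_k=(k-1)\pi/n$ for her $k$-th setting, with Bob using $\beta_k=\alpha_k+\pi/(2n)$. Using $\langle\vec{n}_A\cdot\vec{\sigma}\otimes\vec{n}_B\cdot\vec{\sigma}\rangle_{\mathrm{sing}}=-\cos(\alpha-\beta)$ for equatorial $\vec{n}_A,\vec{n}_B$, every correlator $\langle\bar X_k\bar Y_k\rangle$ and $\langle\bar X_{k+1}\bar Y_k\rangle$ evaluates to $-N\cos(\pi/(2n))$; moreover $\alpha_1-\beta_n=-\pi+\pi/(2n)$, so the twisted term $-\langle\bar X_1\bar Y_n\rangle$ also contributes $-N\cos(\pi/(2n))$. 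Summing the $2n$ equal contributions and fixing $\tr(\gamma)=2Nn=1$ (so $N=1/(2n)$) gives $G_n=-\cos(\pi/(2n))$, saturating the bound.

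The main obstacle will be lining up sign conventions: the minus sign in front of $\langle\bar X_1\bar Y_n\rangle$ in $G_n$ is exactly what turns the periodic shift (with spectrum the $n$-th roots of unity) into the antiperiodic $C_{-}$ (with spectrum the $n$-th roots of $-1$), producing the optimal angle $\pi/(2n)$. Once this identification is in hand, the lower bound reduces to an elementary eigenvalue calculation and the quantum construction follows the standard chain-inequality template.
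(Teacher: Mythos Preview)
Your argument is correct and reaches the same conclusion as the paper, but the route is genuinely cleaner at the linear-algebra step. The paper also reduces to minimizing $\tr(G\gamma)$ over trace-one positive semidefinite $\gamma$, then passes to rank-one $\gamma=\vec{w}\vec{w}^T$ and performs a \emph{two-stage} optimization: first Cauchy--Schwarz over Alice's part $\vec{w}^A$, which leaves the maximization of the quadratic form $G'(\vec{w}^B)=(w^B_1-w^B_n)^2+\sum_{k=2}^n(w^B_k+w^B_{k-1})^2$; the associated $n\times n$ matrix $G'$ (which is nothing but $MM^T$ in your notation) is then diagonalized by an explicit ansatz $w^B_k\propto e^{i\alpha k}$ with the antiperiodic condition $e^{i\alpha n}=-1$, yielding maximal eigenvalue $2(1+\cos(\pi/n))=4\cos^2(\pi/(2n))$. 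Your approach short-circuits this by observing directly that the spectrum of the block anti-diagonal $G$ is $\{\pm\tfrac12\sigma_i(M)\}$ and that $M=I+C_-$ is \emph{normal} (since $C_-$ is orthogonal), so its singular values are just $|1+\omega_k|$ with $\omega_k$ the $n$-th roots of $-1$. This is more economical and avoids the ansatz.

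What the paper's longer detour buys is reusability: the intermediate object $G'$ and its two-dimensional top eigenspace $W=\mathrm{span}\{(\sin(\pi k/n+\delta))_k\}$ are precisely what drive the next two propositions, where one bounds the functional for classical $d$-valued models by analyzing how well vectors $\vec{w}^B$ with entries in $\{\lambda_i-\lambda_j\}$ can align with $W$. If you continue to those results, you will want the paper's decomposition anyway. Your quantum construction (equatorial angles $\alpha_k=(k-1)\pi/n$, $\beta_k=(2k-1)\pi/(2n)$) matches the paper's exactly; the normalization ``$N=1/(2n)$'' should of course be read as a rescaling of the $\pm1$ outcomes rather than a fractional particle number, which is also how the paper handles it.
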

\begin{proof}

From ML we know that, for any set of bipartite marginal distributions, there exists a classical model reproducing the observable second momenta of our macroscopic variables. It follows that optimizing $G_n(\{\langle \bar{X}_j\bar{Y}_k\rangle\})$ against all two-point correlators compatible with ML is equivalent to an optimization over all trace-one positive semidefinite matrices $\gamma$. By convexity, we can thus take $\gamma=\vec{w}(\vec{w})^T$, with $\vec{w}\in\R^{2n}$, $\|\vec{w}\|=1$.

Consequently, minimizing $G_n$ is equivalent to optimize

\be
G_n(\vec{w}^A,\vec{w}^B)=\sum_{k=1}^{n}w^A_kw^B_k+\sum_{k=1}^{n-1}w^A_{k+1}w^B_k-w^A_1w^B_n
\ee

\noindent over all vectors $\vec{w}^A,\vec{w}^B\in\R^n$ such that $\|\vec{w}^A\|^2+\|\vec{w}^B\|^2=1$.

So let us first optimize $\vec{w}^A$. We have that

\begin{eqnarray}
&G(\vec{w}^A,\vec{w}^B)=\sum_{k=2}^nw^A_k\cdot(w^B_k+w^B_{k-1})+w^A_1\cdot(w^B_1-w^B_n)\geq \nonumber\\
&\geq -\sqrt{\sum_{k=2}^n(w^B_{k}+w^B_{k-1})^2+(w^B_{1}-w^B_{n})^2}\|\vec{w}^A\|.
\label{two_val}
\end{eqnarray}

\noindent Our new task is thus to maximize

\be
G'(\vec{w}^B)\equiv\sum_{k=2}^n(w^B_{k}+w^B_{k-1})^2+(w^B_{1}-w^B_{n})^2,
\label{qcirculant}
\ee

Note that we can express $G'(\vec{w}^B)$ as $(\vec{w}^B)^TG'\vec{w}^B$, where $G'$ is an $n\times n$ matrix whose non-zero components are
\begin{eqnarray}
&&G'_{i,i}=2,\mbox{ for } i=1,...,n\nonumber\\
&&G'_{i,i+1}=G'_{i+1,i}=1,\mbox{ for } i=1,...,n-1\nonumber\\
&&G'_{1,n}=G'_{n,1}=-1.
\end{eqnarray}

Gathering intuition from circulant matrices, we try the ansatz $w^B_k=e^{i\alpha k}$ for the eigenvectors of $G'$. One finds that

\be
(G'\vec{w}^B)_k=2e^{i\alpha k}+e^{i\alpha {k+1}}+e^{i\alpha {k-1}}=2[1+\cos(\alpha)]w^B_k,
\ee

\noindent for $k=2,...,n-1$. However, 

\begin{eqnarray}
&&(G'\vec{w}^B)_1=2e^{i\alpha}-e^{i\alpha n}+e^{2i\alpha}. \nonumber\\
&&(G'\vec{w}^B)_n=2e^{i\alpha n}+e^{i\alpha (n-1)}-e^{i\alpha}.
\end{eqnarray}

We circumvent this problem by imposing that $e^{i\alpha(n+1)}=-e^{i\alpha}$. This leads to $\alpha =(2l+1)\pi/n$, with $l=0,1,...,n-1$. This ansatz thus allows us to recover all eigenvectors of $G'$. The greatest eigenvalue is $2(1+\cos(\pi/n))$, and it is attained by the vectors $\{w_k\propto e^{i\pi k/n}\}$ and $\{w_k\propto e^{-i\pi k/n}\}$, or, equivalently, by any vector of the form $\{w_k=\sqrt{2}\sin(k\pi/n+\delta)\}$. The space generated by such vectors will be denoted by $W$, and will play an important role in the proof of Proposition \ref{impossibility}. 

Coming back to our present problem, we have that the minimum value of (\ref{testigo}) upon no restrictions on the gaussian distribution is equal to

\be
-\sqrt{2\left[1+\cos\left(\frac{\pi}{n}\right)\right]}\max_{\|\vec{w}^A\|^2+\|\vec{w}^B\|^2=1}\|\vec{w}^A\|\|\vec{w}^B\|=-\cos\left(\frac{\pi}{2n}\right).
\ee

To complete the proof we thus have to show that the above value can be attained in a macroscopic experiment where many particle pairs in the singlet state are produced and Alice and Bob's interactions correspond to projective equatorial measurements of such particles.

It is straightforward that the measurements

\be
\hat{x}_k=\cos(\psi_k)\sigma_x+\sin(\psi_k)\sigma_z, \hat{y}_k=\cos(\phi_k)\sigma_x+\sin(\phi_k)\sigma_z,
\ee

\noindent with $\psi_k=\frac{(k-1)\pi}{n}$, $\phi_k=\frac{(2k-1)\pi}{2n}$ on both sides of a singlet state saturate the bound $g^*_n$, provided that we assign the values $\{\frac{1}{2n},-\frac{1}{2n}\}$ to each possible outcome.

\end{proof}

We just saw that bi-valued quantum variables suffice to attain the minimum value $g^*_n$ of $G_n(\{\langle \bar{X}_j\bar{Y}_k\rangle\})$. How does this number relate to the minimal value attainable through classical systems of dimensions $d_A,d_B$? The following proposition studies the limits of the $d_A=\infty, d_B=2$-scenario.

\begin{prop}
The minimal value of $\{\tr(G_n\gamma):\tr(\gamma)=1,\gamma\in \G^{\infty,2}_{n,n}\}$ is

\be
g_n^2\equiv-\sqrt{1-\frac{1}{2(n-1)}}.
\ee

\end{prop}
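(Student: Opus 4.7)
The plan is to mimic the structure of the proof of Proposition \ref{quantum}: reduce the problem to minimizing a bilinear form on a cone, exploit convexity to extract a single rank-one extreme ray, and then carry out a discrete optimization over Bob's $\{-1,0,1\}$-valued vector.

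First I would invoke Theorem \ref{naked} together with the fact that Bob's two outcomes can be affinely normalized to $\{0,1\}$, while Alice's $d_A=\infty$ lets the Alice-half of the weight vector $\vec{w}(\vec{c},\vec{c}')$ range freely over $\R^n$. This identifies $\G^{\infty,2}_{n,n}$ with the cone of conical combinations of rank-one matrices $\vec{v}\vec{v}^T$, where $\vec{v}=(\vec{u},t\vec{s})^T$ with $\vec{u}\in\R^n$ arbitrary, $t\ge 0$, and $\vec{s}\in\{-1,0,+1\}^n$. Writing $\gamma$ in block form, the functional depends only on the Alice-Bob cross block $C=t\vec{u}\vec{s}^T$, and a direct computation gives $G_n=t\,\vec{u}^T M^T\vec{s}$, where $M$ is the $n\times n$ matrix with $M_{kk}=M_{k,k+1}=1$ and $M_{n,1}=-1$. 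Linearity of $G_n$ on the normalized slice $\{\gamma\in\G^{\infty,2}_{n,n}:\tr(\gamma)=1\}$ reduces the problem to a single ray subject to $\|\vec{u}\|^2+t^2\|\vec{s}\|^2=1$.

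For fixed $\vec{s}$, aligning $\vec{u}$ antiparallel to $M^T\vec{s}$ and then maximizing over $t$ (which amounts to balancing Alice's and Bob's energies, the same Cauchy--Schwarz manipulation as in eq.~(\ref{two_val})) yields
\begin{equation}
\min_{\gamma}G_n(\gamma)\;=\;-\max_{\vec{s}\in\{-1,0,1\}^n\setminus\{\vec{0}\}}\frac{\|M\vec{s}\|}{2\|\vec{s}\|}.
\end{equation}
Expanding the squared norm gives the key identity
\begin{equation}
\|M\vec{s}\|^2=\sum_{k=1}^{n-1}(s_k+s_{k+1})^2+(s_n-s_1)^2=2\|\vec{s}\|^2+2\sum_{k=1}^n\epsilon_k,
\end{equation}
with $\epsilon_k\equiv s_ks_{k+1}$ for $k<n$ and $\epsilon_n\equiv-s_ns_1$, so the task collapses to proving $\sum_k\epsilon_k\le m(n-2)/(n-1)$, where $m\equiv\|\vec{s}\|^2$ is the number of nonzero coordinates of $\vec{s}$.

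The combinatorial step is the crux and I would split it into two cases. If the support $I$ of $\vec{s}$ is a proper subset of $\{1,\dots,n\}$ and decomposes cyclically into $r\ge 1$ runs, then at most $m-r\le m-1$ of the $\epsilon_k$ are nonzero, so $\sum_k\epsilon_k\le m-1$; a one-line algebraic check shows $m-1\le m(n-2)/(n-1)$ for every $m\le n-1$. If instead $I=\{1,\dots,n\}$ (so $m=n$), each $\epsilon_k\in\{\pm 1\}$, and the telescoping computation $\prod_{k=1}^n\epsilon_k=-\prod_{k=1}^ns_k^2=-1$ forces at least one $\epsilon_k=-1$, whence $\sum_k\epsilon_k\le n-2\le n(n-2)/(n-1)$. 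The bound is saturated by $\vec{s}=(1,\dots,1,0)$ with $m=n-1$, which gives $\|M\vec{s}\|^2/\|\vec{s}\|^2=4-2/(n-1)$. Substituting back,
\begin{equation}
\min_{\gamma}G_n(\gamma)=-\tfrac{1}{2}\sqrt{4-\tfrac{2}{n-1}}=-\sqrt{1-\tfrac{1}{2(n-1)}}=g_n^2,
\end{equation}
which is exactly the claim. The rest is bookkeeping inherited from the proof of Proposition \ref{quantum}.
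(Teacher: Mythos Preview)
Your proof is correct, and the reduction step---passing from $\G^{\infty,2}_{n,n}$ to the one-parameter problem of maximizing $\|M\vec{s}\|/\|\vec{s}\|$ over $\vec{s}\in\{-1,0,1\}^n\setminus\{0\}$---is exactly the paper's (their eq.~(\ref{qcirculant}) is your $\|M\vec{s}\|^2$). Where you diverge is in the combinatorial optimization. The paper argues by first using a sign-flip $w^B_i\mapsto -w^B_i$ to relocate the lone minus sign to a zero entry, so that all nonzero entries may be taken equal to $+\Delta$; it then tracks how $G'$ decreases as zeros are inserted one by one (by $4\Delta^2$ if adjacent to an existing zero, by $6\Delta^2$ otherwise), concluding that a single zero is optimal and that no zeros gives the slightly worse value $(4-8/n)\|\vec{w}^B\|^2$. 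You instead expand $\|M\vec{s}\|^2=2m+2\sum_k\epsilon_k$ and handle the two cases algebraically: for proper support a run-count gives $\sum_k\epsilon_k\le m-1$, and for full support the parity identity $\prod_k\epsilon_k=-1$ forces $\sum_k\epsilon_k\le n-2$. Your route avoids the sign-shifting gymnastics and the incremental zero-placement bookkeeping, at the price of the (easy) algebraic check $m-1\le m(n-2)/(n-1)$; the paper's route has the minor advantage of exhibiting the full landscape $G'=\bigl(4-\tfrac{2}{n-s}\bigr)\|\vec{w}^B\|^2$ for any number $s$ of (clustered) zeros.
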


\noindent Note that $g^*_n<g^2_n$ for $n> 2$. Moreover, $\lim_{n\to\infty}\frac{1+g^*_n}{1+g^2_n}=0$.

\begin{proof}
Following the proof of Proposition \ref{quantum}, we just have to maximize (\ref{qcirculant}) for $\vec{w}^B\in \{0,\pm\Delta\}^n$ as a function of $\|\vec{w}^B\|$. Note that the minus sign in (\ref{qcirculant}) between entries $1$ and $n$ destroys translational invariance. However, we can displace it to any position $k$ of the chain by performing the change of coordinates $w^B_i\to -w^B_i$, for $i\leq k$. Now, suppose that $\vec{w}^B\in\{\pm\Delta,0\}^n$ has exactly $s\geq 1$ null entries, choose the first site $k$ such that $w^B_k=0$ and $w^B_{k-1}\not=0$, and place the minus sign between $k$ and $k-1$. That way, for the purposes of maximizing $G'(\vec{w}^B)$, we can always choose the non-zero components equal to $\Delta$. 

Imagine now that all sites except site $k$ are equal to $\Delta$, giving a value $G'(\vec{w}^B)=4\Delta^2(n-1)$. We are now going to place the remaining $s-1$ zeros in the chain starting from site $k$ and in increasing order in such a way as to maximize the value of $G'(\vec{w}^B)$. While `nullifying' each site $i$, we find two different situations: 

\begin{enumerate}

\item The sites $(i-1,i,i+1)$ are of the form $0-\Delta-\Delta$ and we substitute the middle one by a 0, i.e., we will have $0-0-\Delta$. In that case, the value of $G'(\vec{w}^B)$ will decrease by an amount of $4\Delta^2$.
\item The sites $(i-1,i,i+1)$ are of the form $\Delta-\Delta-\Delta$, we substitute the middle one by a 0. That way, we arrive at $\Delta-0-\Delta$, and the corresponding decrease in $G'(\vec{w}^B)$ is equal to $6\Delta^2$.

\end{enumerate}

It follows that, in order to maximize $G'(\Delta)$ for a fixed number of zeros $s$, with $n> s\geq 1$, the best strategy is to place them all one after the other. The maximal value of $G'(\vec{w}^B)$ is thus $(4-\frac{2}{n-s})\|\vec{w}^B\|^2$. The best situation is hence $s=1$, in which case $G'(\vec{w}^B)=(4-\frac{2}{n-1})\|\vec{w}^B\|^2$. It only remains to consider the case $s=0$. But it is immediate that then the maximum value of $G'(\vec{w}^B)$ equals the slightly smaller (for $n\geq 2$) value $(4-8/n)\|\vec{w}^B\|^2$. 

The smallest value of (\ref{testigo}) under the assumption that Bob's microscopic variables are classical and bivalued is therefore 

\be
\min_{\|\vec{w}^A\|^2+\|\vec{w}^B\|^2=1}-\sqrt{\max_{\|\vec{u}\|=1} G'(\vec{u})}\|\vec{w}^A\|\|\vec{w}^B\|=-\sqrt{1-\frac{1}{2(n-1)}}.
\ee

\end{proof}

The case $d_A=\infty, d_B>2$ is much more difficult to analyze, and so this time we can only provide gross lower bounds $h^d_n$ for $g^d_n\equiv\min\{\tr(G_n\gamma):\tr(\gamma)=1,\gamma\in\G^{\infty,d}_{n,n}\}$. Such bounds, though, satisfy $h^d_n>g^*_n$ for $d\sim O(\sqrt{n})$, thereby proving that no finitely-valued classical model can account for the macroscopic correlations we observe when we perform polarization measurements over different ends of a double beam of many photons in the maximally entangled state.

\begin{prop}
\label{impossibility}
Let $\frac{1}{2}d(d-1)+3\leq \lceil \frac{n}{2}\rceil$. Then, $g^d_n>g^*_n$.
\end{prop}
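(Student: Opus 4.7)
The plan is to extend the argument from the proof of Proposition~\ref{quantum}. Since Alice's spectrum is unconstrained ($d_A = \infty$), I would first optimize out $\vec{w}^A$ exactly as there, obtaining
\[
g^d_n = -\tfrac12\sqrt{\mu^d_n},\qquad \mu^d_n := \max\{\tr(G'\gamma)\ :\ \gamma \in \G^d_n,\ \tr(\gamma) = 1\},
\]
with $G'$ the $n\times n$ matrix appearing in the proof of Proposition~\ref{quantum}. The unconstrained maximum of $\tr(G'\cdot)$ over trace-one PSD matrices equals $\lambda^{*} := 2(1+\cos(\pi/n))$ and is attained precisely on the two-dimensional top eigenspace $W \subset \R^n$ spanned by $(\sin(k\pi/n))_k$ and $(\cos(k\pi/n))_k$, which recovers $g^*_n = -\tfrac12\sqrt{\lambda^{*}}$. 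The task thus reduces to proving the strict inequality $\mu^d_n < \lambda^{*}$ under the hypothesis.

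Next I would appeal to Theorem~\ref{naked} and the convexity of $\G^d_n$ to reduce $\mu^d_n$ to a maximum over rank-one matrices $\vec{w}\vec{w}^T$ with $w_k = \lambda_{c_k} - \lambda_{c'_k}$ for some $\lambda \in \R^d$ and $\vec{c},\vec{c}' \in \{1,\ldots,d\}^n$. Since the normalized feasible set is a finite union of compact pieces (indexed by $(\vec{c},\vec{c}')$), the maximum is actually attained; so if $\mu^d_n = \lambda^{*}$, the maximizer $\vec{w}$ would have to lie inside $W$. The proof therefore reduces to ruling out any nonzero vector of the stated form from belonging to $W$ whenever $d(d-1)/2 + 3 \leq \lceil n/2 \rceil$.

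To establish this, I would parametrize an arbitrary nonzero $\vec{w} \in W$ as $w_k = R\sin(k\pi/n + \phi)$ and count the distinct absolute values of its coordinates. The identities for $\sin A = \pm\sin B$ show that $|w_k| = |w_l|$ with $k \neq l$ forces $k+l \equiv s \pmod{2n}$ or $k+l \equiv s+n \pmod{2n}$, where $s = -2n\phi/\pi$. For any fixed $k$, the two candidate partners $l_A = s - k$ and $l_B = s + n - k$ differ by exactly $n$, while the range $\{1,\ldots,n\}$ has diameter $n-1$; hence at most one partner can lie in range, so each $k$ admits at most one. The equivalence classes under $|w_{\cdot}|$ therefore have size at most $2$, giving $|\{|w_1|,\ldots,|w_n|\}| \geq \lceil n/2 \rceil$. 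On the other hand, any Bob-compatible $\vec{w}$ has its absolute values inside $\{|\lambda_a-\lambda_b|\ :\ a,b \in \{1,\ldots,d\}\}$, whose cardinality is at most $\binom{d}{2}+1 = d(d-1)/2 + 1$. The hypothesis $d(d-1)/2 + 3 \leq \lceil n/2 \rceil$ then yields the contradiction $\lceil n/2 \rceil \leq d(d-1)/2 + 1 < d(d-1)/2 + 3 \leq \lceil n/2 \rceil$.

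The hard part is the pairing analysis, where one must carefully enumerate both same-sign ($k+l\equiv s+n$) and opposite-sign ($k+l\equiv s$) coincidences modulo $2n$ and verify that they cannot both supply a partner for any single $k$; once this combinatorial observation is in place, the compactness and linearity reductions feeding into it are routine.
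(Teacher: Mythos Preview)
Your proposal is correct and shares the paper's overall strategy: optimize out the unconstrained $\vec{w}^A$, reduce $g^d_n>g^*_n$ to showing that no Bob-admissible vector $\vec{w}^B$ can lie in the top eigenspace $W$ of $G'$, and then use a value-counting argument together with the bound $|\{\,|\lambda_a-\lambda_b|\,\}|\le \binom{d}{2}+1$ to force a contradiction with the hypothesis. Where you diverge from the paper is in the combinatorics of the exclusion step. The paper argues that any unit $\vec{u}\in W$ contains a run of $\lceil n/2\rceil-1$ consecutive entries of the form $\pm\sqrt{2/n}\,\sin(\pi j/n+\delta)$, $j=0,\dots,\lceil n/2\rceil-2$, all distinct and of a common sign, and then pigeonholes against the $\le \lceil n/2\rceil-2$ available signed values $\lambda_i-\lambda_j$; this also yields an explicit (if unwieldy) lower bound on $\operatorname{dist}(\vec{w}^B,W)$ and hence on $g^d_n-g^*_n$. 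Your route is cleaner: you show directly that for any nonzero $\vec{u}\in W$ the relation $|u_k|=|u_l|$ with $k\neq l$ forces $k+l$ into a single residue class modulo $n$, so each index has at most one partner and thus $|\{|u_1|,\dots,|u_n|\}|\ge\lceil n/2\rceil$. Combined with the compactness observation (so that the supremum over $\G^d_n$ is attained), this gives the strict inequality without any distance estimate. Both arguments are valid; yours is more transparent, while the paper's retains quantitative information.
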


\begin{proof}
From the proof of Proposition \ref{quantum}, we have that, in order to attain $g^*_n$, Bob's vector has to be parallel to a vector in $W$, the 2-dimensional space of eigenvectors with minimum eigenvalue. Hence, if we manage to prove that $\vec{w}^B$ cannot approximate any such vector, we are done. 

Suppose then that $\frac{1}{2}d(d-1)+3\leq\lceil\frac{n}{2}\rceil$, and call $\vec{u}$ the (normalized) projection of $\vec{w}^B$ onto $W$. Then, the number of different non-negative (or non-positive) entries $\{\lambda_i-\lambda_j:i,j=1,...,d\}$ of $\vec{w}^B$ is at most $\lceil\frac{n}{2}\rceil-2$. Now, any unitary vector $\vec{u}\in W$ is of the form $u_k=\sqrt{\frac{2}{n}}\sin\left(\frac{\pi}{n}k+\delta'\right)$. Identifying the $(n+1)^{th}$ entry with the first entry, we thus have that $\vec{u}$ must have a sequence of consecutive entries of the form $\pm \frac{\sqrt{2}}{n}\sin\left(\frac{\pi}{n} j+\delta\right)$, for $j=0...\lceil\frac{n}{2}\rceil-2$, with $0\leq\delta<\frac{\pi}{n}$. It follows that there exist at least two entries $k,l$ such that $w^B_k=w^B_l$ and $u_k=\pm\sqrt{\frac{2}{n}}\sin(\pi m_k/n+\delta)$, $u_l=\pm\sqrt{\frac{2}{n}}\sin(\pi m_k/n+\delta)$, with $m_k,m_l\in\N, m_k\not=m_l$. The minimum of the quantity $\|\vec{w}^B-\vec{u}\|^2$ is thus lower bounded by

\begin{eqnarray}
&\min_{\vec{u}\in W}\|\vec{w}^B-\vec{u}\|^2\geq (w^B_l-u_l)^2+(w^B_l-u_k)^2\geq \frac{1}{2}(u_l-u_k)^2\geq\nonumber\\
&\geq \min_{m_k\not= m_l}\frac{1}{n}[\sin\left(m_k\frac{\pi}{n}+\delta\right)-\sin\left(m_l\frac{\pi}{n}+\delta\right)]^2\geq \frac{4}{n}\sin^2\left(\frac{\pi}{2n}\right)\sin^2\left(\frac{\pi}{n}\right)>0,
\end{eqnarray}

\noindent where the last inequality follows from taking $m_k=\lceil\frac{n}{2}\rceil-2,m_l=\lceil\frac{n}{2}\rceil-3$ and applying some trigonometric identities.

It follows that $\vec{w}^B\not\in W$, and so the minimum $g^*_n$ cannot be attained. It is straightforward to carry on this argument in order to find a horribly complicated lower bound on the difference $g^d_n-g^*_n$.

\end{proof}

In view of this last witness $G_n$, whose maximal value was attained by quantum mechanical bivalued variables, one may be tempted to think that quantum dichotomic systems suffice to reproduce all gaussian marginal distributions arising in a bipartite scenario. However, such is not the case. Consider a scenario where Alice has only one interaction to choose from. Then we have the following result.

\begin{prop}
Let Alice and Bob's macroscopic observables arise as sums of microscopic variables subject to the no-signalling constraint only, with $d_A=2$ and $d_B=n$. Then, the macroscopic variables satisfy: 

\be
\frac{1}{2}\sqrt{\frac{4^n-1}{3}-\frac{1}{n}}\left[\langle\bar{X}\rangle^2+\sum_{k=1}^n\langle\bar{Y}_k\rangle^2\right]-\sum_{k=1}^n2^{k-1}\langle\bar{X}\bar{Y}_k\rangle\geq 0.
\label{hope}
\ee

\end{prop}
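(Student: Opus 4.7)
The proof is a direct corollary of Proposition \ref{identity} applied to Bob's side, once Alice's single dichotomic contribution is absorbed by a tight AM--GM step.

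By Theorem \ref{naked}, the $(n+1)\times(n+1)$ joint covariance matrix $\gamma$ of $(X,Y_1,\ldots,Y_n)$ lies in the conic hull of rank-one matrices $\vec{w}\vec{w}^T$ with $\vec{w}=(w_0;\vec{w}^B)$: here $w_0=\lambda^A_{c_0}-\lambda^A_{c'_0}\in\R$ arises from Alice's only available difference (since $d_A=2$), while $\vec{w}^B\in\R^n$ has entries $w^B_k=\lambda^B_{c_k}-\lambda^B_{c'_k}$ for some $\vec{c},\vec{c}'\in\{1,\ldots,n\}^n$. By conic linearity, it is enough to verify (\ref{hope}) on each such extreme ray. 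Writing $\vec{v}=(1,2,4,\ldots,2^{n-1})$ so that $\|\vec{v}\|^2=(4^n-1)/3$, the inequality reduces on rank-one rays to
\be
\tfrac{1}{2}\sqrt{\|\vec{v}\|^2-\tfrac{1}{n}}\,\bigl[w_0^2+\|\vec{w}^B\|^2\bigr]\;\geq\;w_0\,(\vec{v}\cdot\vec{w}^B).
\ee

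By AM--GM, $w_0^2+\|\vec{w}^B\|^2\geq 2|w_0|\,\|\vec{w}^B\|$, so it suffices to prove
\be
|\vec{v}\cdot\vec{w}^B|\;\leq\;\sqrt{\|\vec{v}\|^2-\tfrac{1}{n}}\;\|\vec{w}^B\|.
\ee
But this is precisely the Cauchy--Schwarz-type bound established inside the proof of Proposition \ref{identity}: the matrix $C_{ik}=\delta(c_i,k)-\delta(c'_i,k)\in\{-1,0,1\}$ satisfying $\vec{w}^B=C\vec{\lambda}^B$ has rows orthogonal to $(1,\ldots,1)$ and hence linearly dependent, so Lemma \ref{technical} produces a nonzero $\vec{s}\in\{-1,0,1\}^n$ with $\vec{s}\cdot\vec{w}^B=0$; the binary-expansion step gives $|\vec{v}\cdot\hat{s}|\geq 1/\sqrt n$ (since the indicators of disjoint subsets of $\{1,\ldots,n\}$ yield distinct integer sums under $v_i=2^{i-1}$); and decomposing $\vec{v}=\mu\hat{s}+\sqrt{\|\vec{v}\|^2-\mu^2}\,\hat{s}^\perp$ with $|\mu|\geq 1/\sqrt n$ yields the bound by projecting onto $\hat{s}^\perp$ and using Cauchy--Schwarz.

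I expect no conceptual obstacle: the asymmetry $d_A=2$ versus $d_B=n$ is exactly what makes AM--GM tight, collapsing the problem onto the bound already proved in Proposition \ref{identity}. The main subtlety is notational: the left-hand side of (\ref{hope}) as typeset contains $\langle\bar{X}\rangle^2$ and $\langle\bar{Y}_k\rangle^2$, which vanish identically for centered variables; the intended quantities must be the second moments $\langle\bar{X}^2\rangle,\langle\bar{Y}_k^2\rangle$ (i.e.\ the diagonal of the covariance matrix), under which reading the prefactor $\tfrac{1}{2}\sqrt{(4^n-1)/3-1/n}$ matches exactly the constant that AM--GM combined with Proposition \ref{identity} produces.
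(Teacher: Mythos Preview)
Your proof is correct and follows essentially the same route as the paper: reduce to rank-one rays, invoke the Proposition~\ref{identity} bound $|\vec{v}\cdot\vec{w}^B|\leq\sqrt{\|\vec{v}\|^2-1/n}\,\|\vec{w}^B\|$ on Bob's side, and close with AM--GM on $w_0^2+\|\vec{w}^B\|^2$ (the paper phrases this last step as constrained minimization of $(\vec{v}\cdot\vec{w}^B)w^A$ at fixed $\|\vec{w}^B\|^2+(w^A)^2$, which is the same inequality). The one point you skip and the paper makes explicit is why a joint $(n{+}1)$-variate distribution exists at all in this no-signalling bipartite scenario: since Alice has a single setting, one can build $p(a,b_1,\ldots,b_n)=p(a)^{1-n}\prod_k p(a,b_k)$, which is what licenses the appeal to Theorem~\ref{naked} on the full covariance matrix; your reading of the typo $\langle\bar{X}\rangle^2\to\langle\bar{X}^2\rangle$ is also the intended one.
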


\noindent This inequality can be violated by the classical matrix 

\be
\gamma=\left(\begin{array}{c}w^A\\\vec{w}^B\end{array}\right)\left(\begin{array}{c}w^A\\\vec{w}^B\end{array}\right)^T,
\ee

\noindent with $w^A=1$, $w^B_k=\sqrt{\frac{3}{4^n-1}}2^{k-1}$, for $k=1,...,n$.

\begin{proof}
Since Alice has only one interaction available, there exists a global probability distribution for the variables $(x,y_1,...,y_n)$ (take, for instance, $p(a,b_1,...,b_n)\equiv\frac{1}{p(a)^{n-1}}\prod_{i=1}^np(a,b_i)$). It follows that we can regard the scenario as classical and apply the techniques developed so far. The problem thus reduces to minimizing the expression

\be
(\vec{v}\cdot\vec{w}^B)w^A,
\ee

\noindent where $v_i=2^{i-1},i=1,...,n$, for a fixed value of $\|\vec{w}^B\|^2+(w^A)^2$. Following the proof of Proposition \ref{identity}, we have that $\vec{w}^B\perp \vec{s}$, with $s_i=\chi_C(i)-\chi_D(i)$, for two disjoint sets of outcomes $C$ and $D$ such that $C\cup D\not=\emptyset$. Using the same arguments, one concludes that $|\vec{v}\cdot \vec{w}^B|\leq \sqrt{\|\vec{v}\|^2-\frac{1}{n}}\|\vec{w^B}\|$, and so we arrive at (\ref{hope}).

\end{proof}

\section{Conclusion}
\label{conclusion}

In this paper we have shown that different microscopic spectra lead to different restrictions on the macroscopic probability distributions that result out of the convolution of a big number of independent microscopic systems. To put it in another way: it is possible to extract non-trivial information about the output structure of a microscopic system just by analyzing the corresponding gaussian macroscopic variables. We have shown that characterizing the set of all gaussian distributions generated by microscopic systems with a given (finite) spectrum can be formulated as a linear program. We have also proven the existence of infinite dimensional outcome structures that do not allow to recover the set of all gaussian bivariate distributions. In the case where the spectrum of the microscopic variables is not fixed a priori, we have shown that $n+1$ outcomes are enough to recover all $n$-variate gaussian distributions, and this number is optimal. We have demonstrated that the problem of characterizing gaussian distributions generated by independent tuples of $d$-valued microscopic variables with free outcome structure can be cast as a semidefinite program. Furthermore, we studied the complexity class of characterizing the witnesses of the set of gaussians generated by dichotomic variables. Despite the fundamental nature of all these problems, to our knowledge, this line of research had never been previously considered in Probability Theory.

We used the above results to prove that bipartite gaussian states of light cannot be regarded as ensembles of pairs of yet-to-be-discovered $d$-level particles. More concretely, we showed that, for any value of $d$, there exists a quantum optics experiment that proves that gaussian states do not follow the microscopic $(d-1)/2$-spin model. Likewise, we saw that classical models aiming at describing certain macroscopic quantum experiments involving dichotomic quantum systems necessarily involve infinitely-valued classical variables. From a foundational point of view, these two results justify the whole study. We believe, though, that the range of applicability of the techniques developed in this work does not end here, and that they will eventually find a use in other fields.

From a strictly mathematical point of view, this work leaves open some problems that we believe are important to address:

\begin{enumerate}

\item
In Section \ref{quantum_non_loc}, we showed a witness for $\G^{\infty,O(\sqrt{n})}_{n,n}$ that was violated by quantum dichotomic systems. However, the exact value $g^d_n\equiv\min\{\tr(G_n\gamma):\tr(\gamma)=1,\gamma\in \G^{\infty,d}_{n,n}\}$ could not be computed exactly. Is it possible to improve the bounds for $g^d_n$? Alternatively, is there a better witness to separate the quantum bivalued set from classical many-valued models? Note that $G_n$ is not very robust against detector noise.

\item
If very high precision measurements are available, we may obtain (non-zero) estimates of higher order cumulants of $X_1,X_2,...,X_n$. Can we use these estimates to extract more information about $x_1,...,x_n$?

\item
In this work, we have characterized the set of gaussian distributions arising from classical finitely-valued systems. Can our results be extended to the quantum case? That is, given a set of marginal gaussian distributions $P(X_z,Y_t|z,t)$, is it possible to determine if they can be generated by microscopic quantum systems? In the fixed outcome case, it is easy to think of a hierarchy of SDPs to \emph{bound} such a set of gaussian distributions: following the derivation of Theorem \ref{naked}, consider four-partite quantum distributions $p(a,b,a',b'|z,t,z',t')$ which are invariant with respect to the interchange $(a,b)\leftrightarrow (a',b')$. Such distributions can be bounded by sequences of SDPs \cite{hier1,hier2,hier3}, and the cone of naked covariance matrices generated by $p(a,b|z,t)$ can in turn be bounded by a linear expression on $p(a,b,a',b'|z,t,z',t')$. Does this approach converge to the desired set of gaussian distributions? If not, one could consider extensions of $p(a,b|z,t)$ to $2n$ parties and invoke some (to be discovered) de Finetti theorem for quantum correlations. Does this new approach converge? Even better: is it possible to characterize `quantum' covariance matrices with just \emph{one} SDP?

\end{enumerate}

\section*{Acknowledgements}
MN acknowledges support by the Templeton Foundation and the European Commission (Integrated Project QESSENCE). This work was supported by the Spanish grants I-MATH, MTM2011-26912, QUITEMAD and the European project QUEVADIS. We acknowledge Oliver Johnson for discussions about the results of this paper.


\begin{thebibliography}{99}
\bibitem{Bhaskara} A. Bhaskara, M. Charikar, R. Manokaran, A, Vijayaraghava, {\it On Quadratic Programming with a Ratio Objective}, ICALP2012, arXiv:1101.1710.
\bibitem{causal}  L. Bombelli, J. Lee, D. Meyer, R.D. Sorkin, {\it Spacetime as a causal set}, Phys. Rev. Lett. 59:521-524 (1987).
\bibitem{chain}  
S.L. Braunstein and C.M. Caves, {\it Wringing out better Bell inequalities}, Annals of Physics, 202, {\bf 22}, (1990).
\bibitem{jop} 
J. Bri\"{e}t, H. Buhrman and B. Toner, {\it A Generalized Grothendieck Inequality and Nonlocal Correlations that Require High Entanglement }, Comm. Math. Phys. 305(3), 1 (2011).
\bibitem{hier3} 
A. C. Doherty, Y.-C. Liang, B. Toner, S. Wehner, {\it The Quantum Moment Problem and Bounds on Entangled Multi-prover Games }, Proceedings of IEEE Conference on Computational Complexity 2008, pages 199--210. 
\bibitem{einstein} 
A. Einstein, {\it \"{U}ber die von der molekularkinetischen Theorie der W\"{a}rme geforderte Bewegung von in ruhenden Fl\"{u}ssigkeiten suspendierten Teilchen}, Annalen der Physik, {\bf17}, 549 (1905).
\bibitem{Liu} 
Y-K. Liu, {\it The Complexity of the Consistency and N-representability Problems for Quantum States}, arXiv: 0712.3041.
\bibitem{yalmip} 
J. L\"ofberg, \emph{YALMIP : A Toolbox for Modeling and Optimization in MATLAB}, http://control.ee.ethz.ch/\char126 joloef/yalmip.php.
\bibitem{mac_loc} 
M. Navascu\'es and H. Wunderlich, {\it A glance beyond the quantum model }, Proc. Roy. Soc. Lond. A {\bf 466}, 881 (2009).
\bibitem{hier1} 
M. Navascu\'es, S. Pironio and A. Ac\'in, {\it Bounding the set of quantum correlations}, Phys. Rev. Lett. 98, 010401 (2007).
\bibitem{hier2} M. Navascu\'es, S. Pironio and A. Ac\'in, {\it A convergent hierarchy of semidefinite programs characterizing the set of quantum correlations}, New J. Phys. 10, 073013 (2008). 
\bibitem{sedumi} 
J.F. Sturm, \emph{SeDuMi, a MATLAB toolbox for optimization over symmetric cones}, http://sedumi.mcmaster.ca.
\bibitem{sdp} 
L. Vandenberghe and S. Boyd, {\it Semidefinite programming}, SIAM Rev.38, 49-95 (1996).
\bibitem{Wolf} M.M. Wolf, D. Perez-Garcia, {\it Assessing Quantum Dimensionality from Observable Dynamics},  Phys. Rev. Lett. 102, 190504 (2009).

\end{thebibliography}
\end{document}